\newcommand*{\myprime}{^{\prime}\mkern-1.2mu}
\newcommand*{\mydprime}{^{\prime\prime}\mkern-1.2mu}
\newtheorem{thm}{Theorem}
\newtheorem{lem}[thm]{Lemma}
\newtheorem{cor}{Corollary}
\theoremstyle{definition}
\def\BibTeX{{\rm B\kern-.05em{\sc i\kern-.025em b}\kern-.08em
    T\kern-.1667em\lower.7ex\hbox{E}\kern-.125emX}}
\begin{document}

\title{Spectral Efficiency Maximization for Active RIS-aided Cell-Free Massive MIMO Systems with Imperfect CSI}

\author{Mahdi Eskandari,
        Huiling Zhu,
        and~Jiangzhou Wang.~\IEEEmembership{Fellow,~IEEE}
        \thanks{Mahdi Eskandari, Huiling Zhu and Jiangzhou Wang are with the School of Engineering, University
of Kent, UK. (email: \{me377, H.Zhu, J.Z.Wang\}@kent.ac.uk). }
        }


\maketitle

\begin{abstract}
A cell-free network merged with active reconfigurable reflecting surfaces (RIS) is investigated in this paper. Considering the presence of imperfect channel state information (CSI), a linear minimum mean square error (LMMSE) technique is initially proposed to estimate the aggregated channel from a user to access points (APs) via RIS. The central processing unit (CPU) then detects uplink data from individual users using the maximum ratio combining (MRC) approach based on the estimated channel. Then, a closed-form expression for uplink spectral efficiency (SE) is derived which demonstrates its reliance on statistical CSI (S-CSI) alone. The amplitude gain of each active RIS element is derived in a closed-form expression as a function of the number of active RIS elements, the number of users, and the size of each reflecting element. A soft actor-critic (SAC) deep reinforcement learning (RL) method is also used to optimize the phase shift of each active RIS element to maximize the uplink SE. Simulation results highlight active RIS' robustness even in the presence of imperfect CSI, proving its efficacy in cell-free networks.
\end{abstract}

\begin{IEEEkeywords}
Active reconfigurable intelligent surfaces (RIS), channel
estimation, cell-free massive
multiple-input multiple-output (MIMO), correlated Rician fading, spectral efficiency (SE)
\end{IEEEkeywords}

\IEEEpeerreviewmaketitle

\ifCLASSOPTIONcaptionsoff
  \newpage
\fi

\section{Introduction}
In recent years, reconfigurable intelligent surfaces (RISs), driven by reflective radios, have attracted significant attention from academia and industry, due to their ability to reconfigure wireless communication environments intelligently and cost-effectively. A RIS consists of multiple programmable elements that can be intelligently adjusted to manipulate the wireless propagation environment, ultimately improving system performance \cite{liu2021reconfigurable, wu2021intelligent}. By strategically designing the reflecting coefficients for each reflecting element (RE), an RIS can effectively improve signal reception at designated destinations and mitigate interference for unintended users ~\cite{di2020smart}. This capability enables the creation of favourable propagation conditions through artificial means and thus enhances existing wireless communication without the need for a more cumbersome RF chain~\cite{renzo2019smart}. 
With RISs, wireless communication systems become more spectrum- and energy-efficient in comparison to conventional systems that rely on complex RF chains.

However, the performance enhancements achievable through the deployment of passive RIS may fall short of expectations in a variety of scenarios, primarily due to the presence of the double path-loss effect over the cascaded channels \cite{long2021active, zhang2022active}. This phenomenon, characterized by a significant attenuation of signal strength over both the direct and reflected paths, can limit the effectiveness of passive RIS in improving communication quality. Consequently, in such cases, alternative strategies or more advanced RIS configurations become necessary to address the challenges posed by the double path-loss effect and achieve the desired level of performance enhancement in wireless communication systems.
An active RIS design was developed in \cite{long2021active} to address the significant double path loss encountered in cascaded channels. Each element of this active RIS architecture is equipped with a reflection-type amplifier, effectively operating as an active reflector. As a result, these amplifiers can enhance electromagnetic (EM) signals. The advantage of active RIS over passive RIS is that it not only adjusts phase shifters to reflect incident signals but also boosts them by using power from an external source.

The active RIS can amplify incoming signals, but it differs significantly in practical terms from amplify-and-forward (AF) relays. The active RIS maintains the fundamental characteristics and hardware structure of conventional passive RIS technology, except for the replacement of reconfigurable passive load impedances with active load impedances. Despite its active load impedances requiring additional power, active RIS's core operation remains the same as passive RIS implementations, involving the adjustment of incident signals directly at the EM level to achieve the desired reflection, as with other passive RIS implementations.
By contrast, the AF relay takes a different approach. In general, RF chains are used for receiving the signal, and then for transmitting it with amplification. To complete the AF process, two time slots are required at the baseband level \cite{long2021active}.

The active RIS retains the benefits of passive RIS technology, with its advantages being showcased in \cite{long2021active}. In \cite{zhang2022active} the sum-rate maximization problem was formulated subject to the joint design of active RIS phase shift and amplitude and the active beamforming at the base station (BS). The downlink energy efficiency (EE) maximization problem of an MU-MISO system was proposed in \cite{ma2022active}, and the active beamforming at the BS and the phase shifts of the active RIS were designed to maximize the EE of the system. In \cite{zhi2022active} the advantages of
active RISs were demonstrated in comparison with passive RISs under the same power budget. 
In \cite{10012424} an active
RIS-aided simultaneous wireless information and power transfer (SWIPT) system
is discussed. The usage of active RIS in integrated sensing and communication (ISAC) in studied in \cite{10054402}. In \cite{10032130} the integration of rate-splitting multiple access (RSMA) and
RIS is considered.
However, the contributions discussed in \cite{long2021active, zhang2022active, ma2022active, zhi2022active, 10012424, 10054402, 10032130} were contingent upon the presumption of having access to instantaneous channel state information (I-CSI), a factor that led to substantial overhead in channel estimation.

To overcome the channel estimation overhead issue, the authors of \cite{peng2022multi} proposed an 
active RIS-aided communication system over spatially correlated
Rician fading channels, where only statistical CSI was available. Also, the work \cite{li2023performance} proposed a linear minimal mean square error (LMMSE) channel estimation method, and the closed-form outage probability of an active RIS system was developed in the presence of phase noise. 

Cell-free massive multi-input multi-output (MIMO) has garnered significant attention in the context of fifth-generation (5G) networks and beyond, as noted in \cite{ngo2017cell}. The integration of cell-free systems and RIS technology has been explored \cite{eskandari2023two, van2021reconfigurable, zhang2021joint, zhang2021beyond}. However, in current research, the impact of channel estimation on active RIS systems is not well investigated and addressed. Therefore, to enhance the capacity of a RIS-assisted cell-free network, this paper introduces the notion of an active RIS-assisted cell-free network. In this network, the traditional RIS is substituted with its active counterpart. The uplink of the cell-free system is assumed in which the transmission is assisted by an active RIS. Under this structure, in the first step, the channel estimation is done using the LMMSE method. The estimated channel is then used to detect the uplink data transmitted by each user with the maximum ratio combining (MRC) approach. In the second step, the spectral efficiency (SE) of each user is calculated at the central processing unit (CPU). Finally, the sum SE maximization problem is formulated concerning the phase shift matrix of the RIS. Finally, the optimization problem is solved using the soft actor-critic (SAC) method \cite{haarnoja2018soft}. The main contribution of the paper is summarized as follows.
\begin{itemize}
    \item The LMMSE channel estimation is proposed for estimating the cascaded channel of a cell-free massive MIMO system assisted by active RIS. In this case, the dimension of the estimated channel is the same as conventional cell-free massive MIMO systems which reduces the pilot overhead significantly.  Furthermore, the closed-form minimum mean squared error is also calculated considering the pilot contamination effect. 
    \item Considering, the practical issues of designing the amplitude of the active RIS, the closed-form expression of the amplitude of the active RIS is derived. Furthermore, the closed form of the SE of each user is derived which is only obtained 
 based on statistical channel state information (S-CSI) and calculated at the CPU. The
impact of the 
channel estimation errors, pilot contamination, spatial
correlation of the channels, and phase shifts of the RIS, which determine
the system performance, are explicitly observable in the derived expression. 
\item  Considering the complexity of the SE maximization problem, the SAC algorithm is adopted to design the phase shifters of the active RIS. The phase shift design of the active RIS is done with S-CSI. Since S-CSI varies much more slowly than the I-CSI, the channel estimation overlead will reduce significantly. SAC is a reinforcement learning framework that merges the actor-critic paradigm with an entropy-regularized approach. Its objective is to optimize not only the expected reward but also the entropy of the policy. This makes SAC particularly effective for tasks occurring in environments that are intricate or uncertain. Given the complexity arising from a substantial number of  RIS elements, devising the RIS phase shifts becomes demanding. However, the encouraging results achieved by SAC render it well-matched for addressing this challenge.
\end{itemize}

The rest of this paper is organized as follows. The details of the active RIS-aided system model are presented in Section~\ref{secII}. In Section~\ref{secIII}, the uplink pilot transmission and channel estimation are provided. In Section~\ref{secIV},
the details of the uplink data transmission are presented. In Section~\ref{secV}, the SAC algorithm is proposed to solve the sun spectral efficiency problem. In Section~\ref{secVI}, numerical results are presented for performance evaluation. Finally, Section~\ref{secVII} concludes the paper.

The major notations in this paper are listed as follows:
 Lowercase, boldface lowercase, and boldface uppercase letters, represented respectively as \(x\), \(\mathbf{x}\), and \(\mathbf{X}\), signify scalar, vector, and matrix entities. The notation \(|x|\) denotes the absolute value of \(x\), while \(|\mathbf{x}|\) signifies the vector's element-wise absolute values. The norm of vector \(x\) is expressed as \(\| \mathbf{x} \|\). The symbol \(\mathcal{CN} (\mu, \sigma^2)\) is employed for denoting the complex Gaussian distribution characterized by a mean \(\mu\) and variance \(\sigma^2\). The statistical expectation is represented as \(\mathbb{E}[\cdot]\), where \(x^*\) denotes the conjugate of \(x\). The transpose and conjugate transpose of matrix \(X\) are respectively denoted as \(X^T\) and \(X^H\). The matrix entry corresponding to the coordinates \((x_1, x_2)\) in \(X\) is expressed as \([\mathbf{X}]_{[x_1, x_2]}\). Additionally, \(\mathrm{tr}\{\cdot\}\) denotes the trace of a matrix.

\section{System Model} \label{secII}
\begin{figure}
    \centering
    \includegraphics[scale=0.25]{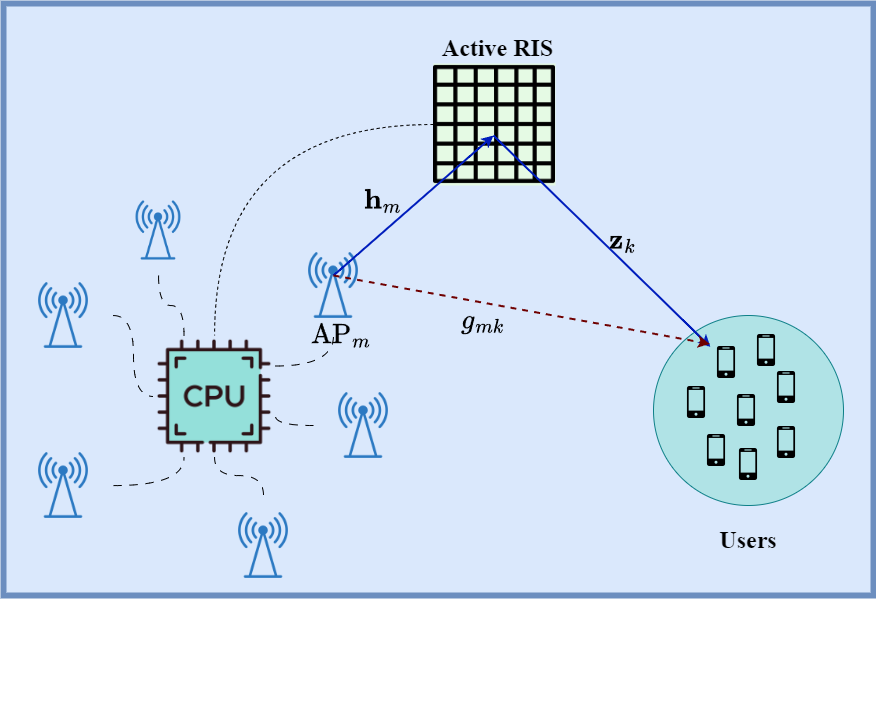}
    \caption{An active RIS-aided Cell-Free Massive MIMO system where $M$ APs collaborate to serve $K$ distant users.}
    \label{fig:sysmodel}
\end{figure}
An active-RIS-aided cell-free massive MIMO system with $M$ single antenna access points (APs) and $K$ single antenna users is considered. The active RIS is equipped with $N$ active elements, each of which is capable of amplifying and reflecting the incident signal. As shown in~Fig.~\ref{fig:sysmodel}, the channel between AP~$m$ and the active RIS, the channel between active RIS and user $k$, and the direct channel from AP $m$ to user $k$ are denoted by $\mathbf{h}_m$, $\mathbf{z}_k$ and $g_{m, k}$, respectively. In particular, the channels are modeled as follows
\begin{align}
    \mathbf{h}_m &\sim \mathcal{CN}(\boldsymbol{0}, \mathbf{R}_m), \\
    \mathbf{z}_k &\sim \mathcal{CN}(\boldsymbol{0},  \Bar{\mathbf{R}}_k), \\
    g_{m, k} &\sim \mathcal{CN}(0, \beta_{m, k}),
\end{align}
where $\beta_{m, k}$ is the large-scale fading coefficient. $ \mathbf{R}_m \in \mathbb{C}^{N \times N}$ and $  \Bar{\mathbf{R}}_k \in \mathbb{C}^{N \times N}$ are the covariance matrices that characterize
the spatial correlation among the channels in $\mathbf{h}_m$ and $ \mathbf{z}_k$, respectively. Specifically, $\mathbf{R}_m$ and $ \Bar{\mathbf{R}}_k$ can be modelled as follows
\begin{align}
    \mathbf{R}_m &= \alpha_m d_H d_V \mathbf{R}, \\
     \Bar{\mathbf{R}}_k &= \bar{\alpha}_k d_H d_V \mathbf{R},
\end{align}
where $\alpha_m$ and $\bar{\alpha}_k$ are the channel large-scale coefficients. $d_H$ and $d_V$ are the horizontal width and the vertical height of each active RIS element, respectively. Also, the $(n_1, n_2)$-th  element of $\mathbf{R} \in \mathbb{C}^{N \times N}$ is $[\mathbf{R}]_{(n_1, n_2)} = \mathrm{sinc}(2\Vert \mathbf{u}_{n_1} - \mathbf{u}_{n_2} \Vert/ \lambda)$ where $\lambda$ is the wavelength and $\mathrm{sinc(x) = \sin{(\pi x)}}/{(\pi x)}$. Moreover, $\mathbf{u}_x = [0, \mod( x-1, N_H )d_H,  \lfloor (x-1)/N_V \rfloor d_V]$ represents the position of element $x$ in $3\mathrm{D}$ space where $N_H$ and $N_V$ denote the total number of RIS elements in each row and
column, respectively. 

For modeling the active RIS, let $\boldsymbol{\Theta} = \mathbf{A} \boldsymbol{\Psi} \in \mathbb{C}^{N \times N}$ denote the reflection matrix of the active RIS, where $\boldsymbol{\Psi} = \mathrm{diag}({\Bar{\psi}}_1, \dots, {\Bar{\psi}}_N)$ represent the reflection coefficients of the active RIS with $ {\Bar{\psi}}_n = e^{j\psi_n}$ being the phase shift of the element $n$. Furthermore, $\mathbf{A} = \mathrm{diag}(a_1, \dots, a_N)$ denote the amplification matrix of the active RIS with $a_n \geq 1$ being the amplification factor of the $n$-th element. Also, different from the
passive RIS, the active RIS incurs non-negligible thermal noise
at all reflecting elements.

According to \cite{long2021active}, the total power consumption of an active RIS system with $N$ elements is given by
\begin{equation}  \label{p_aris}
    P_{\mathrm{ARIS}} = N(P_{\mathrm{c}} + P_{\mathrm{DC}}) + \frac{1}{{\xi}} P_{\mathrm{r}},
\end{equation}
where $\xi$ denotes the amplifier efficiency and  $P_{\mathrm{c}}$ represents the power consumption of the switch and control circuit at each active RIS element; Furthermore, $P_{\mathrm{DC}}$ is the DC biasing power consumption at each active RIS element. Finally $P_{\mathrm{r}} = \sum_{k = 1}^{K} \rho_{u} \mathbb{E}\left\{ \lVert \boldsymbol{\Phi} \mathbf{z}_k \rVert^2 \right\} + \mathbb{E}\left\{ \lVert \boldsymbol{\Phi} \mathbf{v} \rVert^2 \right\}$ is the output power of the active RIS and $\rho_u$ denotes the transmit power of each user. 
\begin{lem}
By assuming all the amplitude gains of the RIS are equal, i.e., $\mathbf{A} = a \mathbf{I}_N$,
the amplitude gain of active RIS is given by
\begin{align} \label{a_nb}
    a = \sqrt{ \frac{\xi(P_{\mathrm{ARIS}}-N(P_{\mathrm{c}} + P_{\mathrm{DC}}))}{
    N \left( \rho_u d_H d_V  \sum_{k=1}^K \Bar{\alpha}_k + \Bar{\sigma}^2
 \right)
    }}.
\end{align}

\begin{proof}
    According to the output power of the active RIS $P_r$, we will have 
\begin{align} \label{amp_diff}
   \sum_{k=1}^K \rho_u  \mathrm{tr}\{ \mathbf{A}^2 \Bar{\mathbf{R}}_k \} + \mathrm{tr}\{ \mathbf{A}^2 \} \Bar{\sigma}^2 = P_r.
\end{align}
    By assuming equal amplitude gain of the RIS, the first term of (\ref{amp_diff}), i.e. $\mathrm{tr}\{ \mathbf{A}^2 \Bar{\mathbf{R}}_k \}$ can be simplified as $\mathrm{tr}\{ \mathbf{A}^2 \Bar{\mathbf{R}}_k \} = a^2 d_H d_V  \bar{\alpha}_k N$. Furthermore the second term of (\ref{amp_diff}) is $\mathrm{tr}\{ \mathbf{A}^2 \} = a^2 N$. Ultimately, through the utilization of the derived expressions and subsequent substitution of (\ref{amp_diff}) into (\ref{p_aris}), coupled with algebraic manipulations, we arrive at the expression delineated in (\ref{a_nb}). 
\end{proof}
Furthermore, the active load of the active RIS can provide a limited amplitude gain $a_{\max}$. Thus, (\ref{a_nb}) can be reformulated as 
\begin{align} \label{a_min}
        a = \min \left\{ \sqrt{ \frac{\xi(P_{\mathrm{ARIS}}-N(P_{\mathrm{c}} + P_{\mathrm{DC}}))}{
    N \left( \rho_u d_H d_V  \sum_{k=1}^K \Bar{\alpha}_k + \Bar{\sigma}^2
 \right)
    }}, a_{\max} \right\}.
\end{align}
\end{lem}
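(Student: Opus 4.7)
The plan is to start from the power-balance identity~(\ref{p_aris}), isolate the active-RIS output power $P_{\mathrm{r}}$, reduce the two expectations inside $P_{\mathrm{r}}$ to scalar traces, specialise those traces to $\mathbf{A}=a\mathbf{I}_N$, and finally solve the resulting scalar equation for $a$. The cap by $a_{\max}$ in~(\ref{a_min}) is then an immediate hardware-driven modification of the unconstrained solution.

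First I would rewrite the reflection matrix as $\boldsymbol{\Theta}=\mathbf{A}\boldsymbol{\Psi}$ and observe that, because $\mathbf{A}$ and $\boldsymbol{\Psi}$ are both diagonal and $|\bar\psi_n|=1$, one has $\boldsymbol{\Theta}^{H}\boldsymbol{\Theta}=\boldsymbol{\Psi}^{H}\mathbf{A}^{H}\mathbf{A}\boldsymbol{\Psi}=\mathbf{A}^{2}$. Consequently
\begin{equation*}
\mathbb{E}\{\lVert\boldsymbol{\Theta}\mathbf{z}_k\rVert^{2}\}=\mathbb{E}\{\mathbf{z}_k^{H}\mathbf{A}^{2}\mathbf{z}_k\}=\mathrm{tr}\{\mathbf{A}^{2}\bar{\mathbf{R}}_k\},
\end{equation*}
using cyclicity of the trace together with $\mathbb{E}\{\mathbf{z}_k\mathbf{z}_k^{H}\}=\bar{\mathbf{R}}_k$; the analogous identity for the thermal-noise term $\mathbf{v}\sim\mathcal{CN}(\boldsymbol{0},\bar\sigma^{2}\mathbf{I}_N)$ yields $\mathrm{tr}\{\mathbf{A}^{2}\}\bar\sigma^{2}$. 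Substituting these into the definition of $P_{\mathrm{r}}$ gives exactly~(\ref{amp_diff}).

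Next, under the equal-gain assumption $\mathbf{A}=a\mathbf{I}_N$, the two traces collapse: the crucial fact is $\mathrm{tr}\{\mathbf{R}\}=N$, which follows because $[\mathbf{R}]_{(n,n)}=\mathrm{sinc}(0)=1$, so $\mathrm{tr}\{\mathbf{A}^{2}\bar{\mathbf{R}}_k\}=a^{2}d_Hd_V\bar\alpha_k\,\mathrm{tr}\{\mathbf{R}\}=a^{2}d_Hd_V\bar\alpha_k N$ and $\mathrm{tr}\{\mathbf{A}^{2}\}=a^{2}N$. Summing over the $K$ users as in~(\ref{amp_diff}) factors $a^{2}N$ out, producing $P_{\mathrm{r}}=a^{2}N\bigl(\rho_u d_Hd_V\sum_{k=1}^{K}\bar\alpha_k+\bar\sigma^{2}\bigr)$.

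Finally I would substitute this expression for $P_{\mathrm{r}}$ into~(\ref{p_aris}), move the static consumption term $N(P_{\mathrm{c}}+P_{\mathrm{DC}})$ to the left-hand side, multiply by $\xi$, divide by the bracketed factor multiplying $a^{2}N$, and take the positive square root, yielding~(\ref{a_nb}); clipping by $a_{\max}$ to respect the hardware amplitude limit then gives~(\ref{a_min}). The only non-routine step is the very first one, namely recognising that the unit-modulus phase factors in $\boldsymbol{\Psi}$ cancel in $\boldsymbol{\Theta}^{H}\boldsymbol{\Theta}$ so that the output power depends on $\mathbf{A}$ alone; everything after that is algebra on diagonal matrices and one standard Gaussian-quadratic-form identity, so no serious obstacle is anticipated.
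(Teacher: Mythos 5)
Your proposal is correct and follows essentially the same route as the paper's own proof: reduce $P_{\mathrm{r}}$ to the trace identity in (\ref{amp_diff}), specialise to $\mathbf{A}=a\mathbf{I}_N$ using $\mathrm{tr}\{\mathbf{R}\}=N$, and solve the resulting scalar equation from (\ref{p_aris}). The only difference is that you explicitly justify the steps the paper asserts without comment, namely that $\boldsymbol{\Theta}^{H}\boldsymbol{\Theta}=\mathbf{A}^{2}$ (so the unit-modulus phases drop out of the output power) and the Gaussian quadratic-form identity $\mathbb{E}\{\lVert\boldsymbol{\Theta}\mathbf{z}_k\rVert^{2}\}=\mathrm{tr}\{\mathbf{A}^{2}\bar{\mathbf{R}}_k\}$, which is a welcome addition but not a different argument.
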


\section{Uplink Pilot Training Phase} \label{secIII}
The channel can be independently estimated by the $\tau_p$ orthogonal pilot sequence sent by all of the $K$ users. It is assumed that the pilot sent by  user~$k$ denoting $\mathbf{s}_k \in \mathbb{C}^{\tau_p \times 1}$ with $\lVert \mathbf{s}_k \rVert^2 = 1$ represents the pilot sequence sent by user~$k$. For the pilot transmission phase, all the $K$ users transmit their pilot sequence to the $M$ APs simultaneously. Specifically, user~$k$ transmits the pilot signal $\sqrt{\rho \tau_p} \mathbf{s}_k$.
Denote the set $\mathcal{P}_k$ as the set of the users that share the same pilot sequence (including user $k$).
Then,  the received signal at AP $m$ denoted as $\mathbf{y}_{m}^{(\mathrm{p})} \in \mathbb{C}^{1 \times \tau_p}$ can be written as 
\begin{align}
    \mathbf{y}_{m}^{(\mathrm{p})} = \sum_{k = 1}^{K} \sqrt{\rho \tau_p} q_{m, k} \mathbf{s}_k^H +  \mathbf{p}_m + \mathbf{w}_{m}^{(\mathrm{p})},
\end{align}
where $q_{m, k} = g_{m, k} + \mathbf{h}_m^H \boldsymbol{\Theta} \mathbf{z}_k$ is the aggregated channel and $\mathbf{p}_m =  \mathbf{h}_m^H \boldsymbol{\Theta} \mathbf{V}$ the equivalent active noise introduced by the active RIS, and $\mathbf{V} \in \mathbb{C}^{N \times \tau_p}$ is the Gaussian additive noise at the active RIS with zero mean and the variance of $\bar{\sigma}^2$. Also, $\mathbf{w}_{m}^{(\mathrm{p})} \sim  \mathcal{CN}(\mathbf{0}, \sigma^2\mathbf{I}_{\tau_p})$ is the additive noise at AP~$m$. Then, AP~$m$ projects the received signal on $\frac{1}{\sqrt{\rho \tau_p}}\mathbf{s}_k$ to estimate the channel from the user $k$ as follows
\begin{align} \nonumber
    y_{m, k}^{(\mathrm{p})} &=   \frac{1}{\sqrt{\rho \tau_p}}\mathbf{y}_{m}^{(\mathrm{p})} \mathbf{s}_k \\ \label{observation_pilot}
    &= q_{m, k} + \sum_{k^{\myprime} \in \mathcal{P}_k \backslash \{ k \}}  q_{m, k^{\myprime}}  + \frac{1}{\sqrt{\rho \tau_p}} \Bar{p}_{m, k}^{(\mathrm{p})} + \frac{1}{\sqrt{\rho \tau_p}} w_{m, k}^{(\mathrm{p})},
\end{align}
where $w_{m, k}^{(\mathrm{p})} = \mathbf{w}_{m}^{(\mathrm{p})}\mathbf{s}_k$ and $\Bar{p}_{m, k}^{(\mathrm{p})} = \mathbf{p}_m \mathbf{s}_k$.

\begin{lem}\label{ch_est}
By assuming that AP~$m$ employs the LMMSE estimation method based on the signal observation
in (\ref{observation_pilot}), the estimate of the aggregated channel $q_{m, k}$ can be formulated as 
\begin{equation}
    \hat{q}_{m, k} = \frac{\mathbb{E} \{ y_{m, k}^{(\mathrm{p})*} q_{m, k} \}}{\mathbb{E} \{ \lVert y_{m, k}^{(\mathrm{p})} \rVert^2 \}} y_{m, k}^{(\mathrm{p})},
\end{equation}
Thus, the  estimated channel could be written as 
\begin{equation}
     \hat{q}_{m, k} = c_{m, k} y_{m, k}^{(\mathrm{p})},
     \label{est_channel}
\end{equation}
with 
\begin{align} \label{c_mk}
    c_{m, k} = \frac{\rho \tau_p \kappa_{m, k} }{\rho \tau_p \sum_{k^{\myprime} \in \mathcal{P}_k} 
 \kappa_{m, k^{\myprime}} + \Bar{\sigma}^2 a^2 \alpha_m N  +  \sigma^2},
\end{align}
where $\kappa_{m, k} \triangleq \beta_{m, k} +  \mathrm{tr}\{ \boldsymbol{\Xi}_{m, k} \}   $ and $\boldsymbol{\Xi}_{m, k} \triangleq a^2 \boldsymbol{\Psi}  \Bar{\mathbf{R}}_k \boldsymbol{\Psi}^H \mathbf{R}_m $. The estimated channel in (\ref{est_channel}) has zero-mean with variance $\gamma_{m, k}$ as follows
\begin{equation}
    \gamma_{m, k} = \mathbb{E} \{ | \hat{q}_{m, k} |^2 \} =  \kappa_{m, k} c_{m, k}.
\end{equation}
Moreover, the estimation error is $e_{m, k} = q_{m, k}  -\hat{q}_{m, k}$ with zero-mean and variance equal to 
\begin{equation}
    \mathbb{E} \{ | e_{m, k} |^2 \} = \kappa_{m, k}  -  \gamma_{m, k}.
\end{equation}
Finally, the normalized mean square error (NMSE) of the channel estimate of the $k$-th at the $m$-AP is given by
\begin{align} \label{NMSE}
    \eta_{m, k} &= \frac{ \mathbb{E} \{ | e_{m, k} |^2 \}}{ \mathbb{E} \{ | q_{m, k} |^2 \}} = 1 - c_{m, k}.
\end{align}
\end{lem}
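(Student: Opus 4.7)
The plan is to apply the standard scalar LMMSE formula and assemble the three ingredients it requires: the second moment of the target $q_{m,k}$, the cross-correlation between $q_{m,k}$ and the projected observation $y_{m,k}^{(\mathrm{p})}$, and the second moment of $y_{m,k}^{(\mathrm{p})}$ itself. Once the coefficient $c_{m,k}$ is identified, the statements on $\gamma_{m,k}$, the error variance, and the NMSE follow as direct corollaries of the LMMSE orthogonality principle.

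First I would compute $\mathbb{E}\{|q_{m,k}|^2\}$. Since $g_{m,k}$, $\mathbf{h}_m$, and $\mathbf{z}_k$ are zero-mean and mutually independent, only the diagonal terms of $q_{m,k}=g_{m,k}+\mathbf{h}_m^H\boldsymbol{\Theta}\mathbf{z}_k$ survive, and a trace identity applied to the inner quadratic form yields $\beta_{m,k}+\mathrm{tr}\{\boldsymbol{\Theta}\bar{\mathbf{R}}_k\boldsymbol{\Theta}^H\mathbf{R}_m\}$, which with $\boldsymbol{\Theta}=a\boldsymbol{\Psi}$ is exactly $\kappa_{m,k}$. The cross-correlation $\mathbb{E}\{y_{m,k}^{(\mathrm{p})*}q_{m,k}\}$ then collapses to $\kappa_{m,k}$, because the pilot-contaminating terms $q_{m,k^{\myprime}}$ with $k^{\myprime}\neq k$, the projected RIS noise $\bar p_{m,k}^{(\mathrm{p})}$, and the AP noise $w_{m,k}^{(\mathrm{p})}$ are all zero-mean and independent of $q_{m,k}$.

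Next I would expand $\mathbb{E}\{|y_{m,k}^{(\mathrm{p})}|^2\}$ term by term. Independence eliminates the cross terms, so the pilot-sharing users contribute $\sum_{k^{\myprime}\in\mathcal{P}_k}\kappa_{m,k^{\myprime}}$ and the AP noise contributes $\sigma^2/(\rho\tau_p)$. For the active-RIS noise, since $\mathbf{V}\mathbf{s}_k\sim\mathcal{CN}(\mathbf{0},\bar{\sigma}^2\mathbf{I}_N)$ using $\|\mathbf{s}_k\|^2=1$, and is independent of $\mathbf{h}_m$, a standard trace identity yields $\bar{\sigma}^2\,\mathrm{tr}\{\boldsymbol{\Theta}\boldsymbol{\Theta}^H\mathbf{R}_m\}$, which collapses to $\bar{\sigma}^2 a^2\,\mathrm{tr}\{\mathbf{R}_m\}$ because the unit-modulus phase-shift matrix satisfies $\boldsymbol{\Psi}\boldsymbol{\Psi}^H=\mathbf{I}_N$; the sinc correlation matrix then contributes only through its unit diagonal, so $\mathrm{tr}\{\mathbf{R}_m\}$ becomes proportional to $\alpha_m N$. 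Multiplying numerator and denominator by $\rho\tau_p$ and substituting into the LMMSE ratio reproduces $c_{m,k}$ as stated.

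For the remaining claims, note that $\hat q_{m,k}=c_{m,k}y_{m,k}^{(\mathrm{p})}$ has variance $c_{m,k}^2\,\mathbb{E}\{|y_{m,k}^{(\mathrm{p})}|^2\}$, which by the explicit form of $c_{m,k}$ telescopes to $c_{m,k}\kappa_{m,k}=\gamma_{m,k}$. The LMMSE orthogonality relation $\mathbb{E}\{e_{m,k}\hat q_{m,k}^*\}=0$ immediately gives $\mathbb{E}\{|e_{m,k}|^2\}=\mathbb{E}\{|q_{m,k}|^2\}-\gamma_{m,k}=\kappa_{m,k}-\gamma_{m,k}$, and the NMSE is $(\kappa_{m,k}-\gamma_{m,k})/\kappa_{m,k}=1-c_{m,k}$. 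The main obstacle is the bookkeeping inside the RIS-noise term: carefully showing that $\boldsymbol{\Psi}\boldsymbol{\Psi}^H$ collapses to $\mathbf{I}_N$ and that only the diagonal of the sinc correlation matrix survives is precisely where a stray factor can creep into the denominator of $c_{m,k}$, so I would do that step explicitly and double-check against the $\|\mathbf{s}_k\|^2=1$ normalization before assembling the final ratio.
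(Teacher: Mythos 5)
Your derivation is correct and follows exactly the route the paper implicitly relies on (the lemma is stated without a written proof): the scalar LMMSE ratio with $\mathbb{E}\{|q_{m,k}|^2\}=\kappa_{m,k}$ from Appendix~A, independence killing the cross terms, and $\boldsymbol{\Theta}\boldsymbol{\Theta}^H=a^2\mathbf{I}_N$ collapsing the active-noise term to $\bar{\sigma}^2 a^2\,\mathrm{tr}\{\mathbf{R}_m\}$, after which $\gamma_{m,k}$, the error variance, and the NMSE follow from orthogonality just as you state. Your caution about the RIS-noise bookkeeping is well placed: the exact value is $\bar{\sigma}^2 a^2\alpha_m d_H d_V N$ since $\mathrm{tr}\{\mathbf{R}_m\}=\alpha_m d_H d_V N$, so the $d_H d_V$ factor appears to have been dropped in the paper's stated denominator of $c_{m,k}$ rather than in your argument.
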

Based on \cite{van2021outage}, $ \mathrm{NMSE}_{mk}$ is minimized when all the phase shifters of the active RIS are set to be equal. In this case, $\kappa_{m, k} \triangleq \beta_{m, k} + \alpha_m \Bar{\alpha}_k a^2 (d_H d_V)^2 \mathrm{tr}\{ (\mathbf{R})^2 \}   $. 

\section{Uplink Data Transmission and Problem Formulation} \label{secIV}

In the uplink, all the $K$ users transmit their data to all $M$ APs simultaneously. Let the data for user $k$ denote $x_k$ satisfying $\mathbb{E}\{|x_k|^2 \} = 1$. Then, the received signal at AP~$m$ is given by
\begin{align}
    y_{m} = \sqrt{\rho_u} \sum_{k = 1}^{K} q_{m, k} x_k + p_m + w_{m},
\end{align}
where $w_{m} \sim \mathcal{CN}(0, \sigma^2)$ is the additive noise and $p_m =  \mathbf{h}_m^H \boldsymbol{\Theta} \mathbf{v}$. For data detection, the MRC method is applied at the CPU to decode the signal transmitted by each user. Specifically, the decision statistic is
\begin{align} \nonumber
    r_{k} &= \sum_{m = 1}^{M} \hat{q}_{m, k}^* y_{m} \\
    &= \sqrt{\rho_u} \sum_{m = 1}^{M} \sum_{k = 1}^{K} \hat{q}_{m, k}^* q_{m, k} x_k + \hat{q}_{m, k}^* p_m + \hat{q}_{m, k}^* w_{m}.
\end{align}

\begin{lem}
    
The uplink $\mathrm{SINR}$ of user $k$ is given by
\begin{align} \label{sinr_simple}
    \Gamma_{k} = \frac{  I_{1k}^2 } {I_{2k} + I_{3k}}
\end{align}
where
\begin{align} \displaybreak[1]
 I_{1k} &= \sqrt{\rho_u} \sum_{m=1}^M \gamma_{m, k} \\ \nonumber \displaybreak[1]
    I_{2k} &= \rho_u \sum_{k^{\myprime}=1, }^K \sum_{k^{\mydprime} \in \mathcal{P}_k}  \sum_{m = 1}^{M} \sum_{m^{\myprime} = 1}^{M}  c_{m, k} c_{m^{\myprime}, k} \mathrm{tr}\{ \boldsymbol{\Xi}_{m, k^{\myprime}} \boldsymbol{\Xi}_{m^{\myprime}, k^{\mydprime}} \} \\ \nonumber
    &+ \sum_{m = 1}^{M}  \rho_u \gamma_{m, k}^2  + \rho_u  \sum_{k^{\myprime} =1, k^{\myprime} \neq k}^K  
      \sum_{k^{\mydprime} \in \mathcal{P}_k} \sum_{m = 1}^{M} c_{m, k}^2 \kappa_{m, k^{\mydprime}}\kappa_{m, k^{\myprime}} \\ \nonumber \displaybreak[1]
 &+  \frac{\rho_u}{\rho \tau_p} \sum_{k^{\myprime} =1, }^K   \sum_{m = 1}^{M} c_{m, k}^2  \alpha_{m, k^{\myprime}} +  \frac{\rho_u \sigma^2}{\rho \tau_p} \sum_{k^{\myprime} =1, }   \sum_{m = 1}^{M} c_{m, k}^2 \kappa_{m, k^{\myprime}} \\ \nonumber \displaybreak[1]
      &+ \rho_u \sum_{k^{\myprime} \in \mathcal{P}_k \backslash \{k\}}  \left( \sum_{m = 1}^{M}   c_{m, k}  \kappa_{m, k^{\myprime}} \right)^2   \\\label{UI} \displaybreak[1]
      &+ \rho_u \sum_{k^{\myprime} \in \mathcal{P}_k } \sum_{m = 1}^{M}  c_{m, k}^2  \kappa_{m, k^{\myprime}}^2 + \rho_u \sum_{k^{\myprime} \in \mathcal{P}_k }  \sum_{m = 1}^{M}  c_{m, k}^2  \mathrm{tr}\{ \boldsymbol{\Xi}_{m, k^{\myprime}}^2 \} \\
I_{3k} &= \sum_{m=1}^M \alpha_{m, k} +  \sigma^2 \sum_{m = 1}^{M} \kappa_{m, k}.
\end{align}
with
\begin{align}
     \alpha_{m, k} &\triangleq N\bar{\sigma}^2 a^2 \beta_{m, k}\mathrm{tr} \{ \Bar{\mathbf{R}}_m \} + N^2\bar{\sigma}^2 a^4 ( \mathrm{tr}\{\mathbf{R}_m^2 \} \\ \nonumber
     &+ (\mathrm{tr}\{\mathbf{R}_m \})^2 ) \mathrm{tr} \{ \Bar{\mathbf{R}}_k \} 
\end{align}
\end{lem}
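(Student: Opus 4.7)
The plan is to establish (\ref{sinr_simple}) by invoking the standard use-and-then-forget (UatF) achievable-rate technique for cell-free massive MIMO, which writes
$r_k = \sqrt{\rho_u}\, \mathbb{E}\!\left\{ \textstyle\sum_{m=1}^M \hat{q}_{m,k}^{*} q_{m,k}\right\} x_k + N_{\mathrm{eff}}$,
where $N_{\mathrm{eff}}$ collects (i) the self-signal fluctuation $\sqrt{\rho_u}(\sum_m \hat{q}_{m,k}^{*} q_{m,k} - \mathbb{E}\{\cdot\})x_k$, (ii) the multi-user interference $\sqrt{\rho_u}\sum_{k^{\myprime}\neq k}\sum_m \hat{q}_{m,k}^{*} q_{m,k^{\myprime}} x_{k^{\myprime}}$, (iii) the amplified-RIS noise $\sum_m \hat{q}_{m,k}^{*} p_m$, and (iv) the AP thermal noise $\sum_m \hat{q}_{m,k}^{*} w_m$. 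Because $x_k$ and every noise process are zero-mean and mutually independent of the data $x_{k^{\myprime}}$ for $k^{\myprime}\neq k$, these four components are uncorrelated with the deterministic signal term, so the UatF SINR is $|\mathbb{E}\{\text{signal}\}|^2$ divided by the sum of their individual variances. The signal coefficient is immediate: by the LMMSE orthogonality in Lemma~\ref{ch_est}, $\mathbb{E}\{\hat{q}_{m,k}^{*} q_{m,k}\} = \mathbb{E}\{|\hat{q}_{m,k}|^2\} = \gamma_{m,k}$, which already yields $I_{1k} = \sqrt{\rho_u}\sum_m \gamma_{m,k}$.

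To evaluate $I_{2k}$, I would split the index $k^{\myprime}$ by whether it is pilot-aligned with user $k$. For $k^{\myprime}\notin\mathcal{P}_k$, the projection onto $\mathbf{s}_k$ annihilates $k^{\myprime}$'s pilot contribution, so $\hat{q}_{m,k}$ is independent of $q_{m,k^{\myprime}}$ apart from their common dependence on $\mathbf{h}_m$; expanding $q_{m,k^{\myprime}}=g_{m,k^{\myprime}}+\mathbf{h}_m^H\boldsymbol{\Theta}\mathbf{z}_{k^{\myprime}}$ and applying the Gaussian quadratic-form identity $\mathbb{E}\{|\mathbf{h}_m^H\boldsymbol{\Theta}\mathbf{z}_{k^{\myprime}}|^2\} = \mathrm{tr}\{\boldsymbol{\Xi}_{m,k^{\myprime}}\}$ recovers the $\kappa_{m,k^{\myprime}}$ factor, while the fourth-order Isserlis (Wick) identity for the cross-AP case $m\neq m^{\myprime}$ produces the $\mathrm{tr}\{\boldsymbol{\Xi}_{m,k^{\myprime}}\boldsymbol{\Xi}_{m^{\myprime},k^{\mydprime}}\}$ residuals. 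For $k^{\myprime}\in\mathcal{P}_k$ (including $k^{\myprime}=k$), the pilot observation $y_{m,k}^{(\mathrm{p})}$ in (\ref{observation_pilot}) is shared, so the estimates obey the rigid relation $\hat{q}_{m,k^{\myprime}} = (c_{m,k^{\myprime}}/c_{m,k})\,\hat{q}_{m,k}$; writing $q_{m,k^{\myprime}} = \hat{q}_{m,k^{\myprime}} + e_{m,k^{\myprime}}$ and exploiting the LMMSE orthogonality between estimate and error then produces the coherent beamforming-gain term $\rho_u(\sum_m c_{m,k}\kappa_{m,k^{\myprime}})^2$ together with the incoherent fourth-order residuals on the last line of (\ref{UI}). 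The additional $c_{m,k}^2\kappa_{m,k^{\myprime}}$ terms arise by expanding the $\bar p_{m,k}^{(\mathrm{p})}$ and $w_{m,k}^{(\mathrm{p})}$ pilot-noise pieces hidden inside $\hat{q}_{m,k}$.

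The two noise components in $I_{3k}$ are handled separately. The AP-noise variance $\mathbb{E}\{|\sum_m \hat{q}_{m,k}^{*} w_m|^2\}$ reduces by independence across $m$ to $\sigma^2\sum_m\mathbb{E}\{|\hat{q}_{m,k}|^2\}$, which after substituting $\gamma_{m,k}=c_{m,k}\kappa_{m,k}$ gives the second contribution to $I_{3k}$. The amplified-RIS noise $p_m=\mathbf{h}_m^H\boldsymbol{\Theta}\mathbf{v}$ is coupled to $\hat{q}_{m,k}$ only through $\mathbf{h}_m$, so I would compute $\mathbb{E}\{|\sum_m\hat{q}_{m,k}^{*} p_m|^2\}$ by first taking the expectation over the independent noise $\mathbf{v}$ using $\mathbb{E}\{\mathbf{v}\mathbf{v}^H\}=\bar\sigma^2\mathbf{I}_N$, then expanding $\hat{q}_{m,k}$ into its $g_{m,k}$, $\mathbf{h}_m^H\boldsymbol{\Theta}\mathbf{z}_{k^{\myprime}}$, and pilot-noise parts, and finally applying the Gaussian fourth-moment identity $\mathbb{E}\{(\mathbf{h}_m^H \mathbf{B}\mathbf{h}_m)(\mathbf{h}_m^H\mathbf{C}\mathbf{h}_m)\} = \mathrm{tr}\{\mathbf{B}\mathbf{R}_m\}\mathrm{tr}\{\mathbf{C}\mathbf{R}_m\} + \mathrm{tr}\{\mathbf{B}\mathbf{R}_m\mathbf{C}\mathbf{R}_m\}$ with $\mathbf{B}=\boldsymbol{\Theta}\boldsymbol{\Theta}^H$ and appropriate $\mathbf{C}$. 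The two resulting terms produce precisely the $N\bar\sigma^2 a^2\beta_{m,k}\mathrm{tr}\{\bar{\mathbf{R}}_m\}$ piece and the $N^2\bar\sigma^2 a^4$ piece carrying $\mathrm{tr}\{\mathbf{R}_m^2\}$ and $(\mathrm{tr}\{\mathbf{R}_m\})^2$ that comprise $\alpha_{m,k}$.

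The main obstacle is the combinatorial bookkeeping of the pilot-contamination fourth-order terms in $I_{2k}$: each expectation $\mathbb{E}\{\hat{q}_{m,k}^{*} q_{m,k^{\myprime}}\, q_{m^{\myprime},k^{\mydprime}}^{*}\hat{q}_{m^{\myprime},k}\}$ with $k^{\myprime},k^{\mydprime}\in\mathcal{P}_k$ must be expanded in the underlying independent Gaussians $g_{m,\cdot}$, $\mathbf{h}_m$, $\mathbf{z}_{\cdot}$, $\mathbf{V}$ and collected via Isserlis' theorem, paying careful attention to which pairings survive across the AP index $m$ (through the per-AP channels $\mathbf{h}_m,g_{m,\cdot}$) versus the user index $k^{\myprime}$ (through the shared channel $\mathbf{z}_{k^{\myprime}}$ seen by all APs via the RIS). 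Once this pairing analysis is done, the remaining work is routine algebra that resolves into the traces and products in (\ref{UI}).
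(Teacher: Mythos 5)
Your proposal is correct and follows essentially the same route as the paper's Appendix~B: the use-and-then-forget decomposition into the desired-signal, beamforming-uncertainty, multi-user-interference, amplified-RIS-noise, and thermal-noise terms, with the variances evaluated via the complex Gaussian fourth-moment (Wishart/Isserlis) identities and the interference split according to pilot-sharing sets, exactly as the paper does. The only bookkeeping quibble is in the thermal-noise term, where $\sigma^2\sum_{m}\mathbb{E}\{|\hat{q}_{m,k}|^2\}$ equals $\sigma^2\sum_{m}\gamma_{m,k}$ rather than the stated $\sigma^2\sum_{m}\kappa_{m,k}$, but the paper's own proof makes the identical identification, so this does not distinguish your argument from theirs.
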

\begin{proof}
    See Appendix B.
\end{proof}
The uplink SE is given by 

\begin{align}
    R_k =  \log_2 (1 + \Gamma_{k}).
\end{align}

The optimization problem is formulated as 
maximizing the sum SE of all the users subject to the phase shifts of the active RIS. Specifically, the problem is formulated as follows
\begin{subequations} \label{eq:opt_sum}
\begin{align} \displaybreak[0]
    \max_{\boldsymbol{\Psi}} &\hspace{10mm} \sum_{k=1}^K R_k
    \tag{\ref{eq:opt_sum}} \\
 \text{subject to} \hspace{5mm}
    & |{\Bar{\psi}}_n| = 1, \forall n = 1, \dots, N
\end{align}
\end{subequations}

 Due to the non-convex nature of the optimization problem in (\ref{eq:opt_sum}), and the inability to express it in quadratic form, conventional convex optimization methods are not applicable. In this paper, the use of the soft actor-critic (SAC) algorithm is adopted to solve the problem. Considering SAC offers several advantages for large-scale problems such as handling high-dimensional spaces efficiently, balancing exploration and exploitation effectively, utilizing value function approximation for improved learning, and exhibiting robustness to hyperparameter choices. The details of the SAC algorithm are provided in the next section.

\section{Using SAC to solve (\ref{eq:opt_sum})} \label{secV}
Soft Actor-Critic (SAC) is a powerful reinforcement learning algorithm designed to tackle a wide range of continuous control problems. SAC is rooted in the field of deep reinforcement learning and combines key principles from both actor-critic methods and entropy regularization. Unlike traditional reinforcement learning algorithms that aim solely to maximize cumulative rewards, SAC incorporates the notion of entropy, which measures the randomness or uncertainty of an agent's actions \cite{haarnoja2018soft}. By encouraging higher entropy in the policy, SAC strikes a balance between exploration and exploitation. This means that the agent not only learns to maximize rewards but also maintains a certain level of unpredictability in its actions, allowing it to explore new strategies effectively. SAC achieves this through a combination of a soft value function (hence the name "Soft") and a stochastic policy, making it particularly well-suited for complex and high-dimensional control tasks where exploration is crucial.

One of the primary advantages of SAC lies in its remarkable sample efficiency and stability. It efficiently learns optimal policies with relatively fewer samples compared to other algorithms, making it suitable for real-world applications where collecting data can be costly or time-consuming. Additionally, SAC's ability to handle continuous action spaces and its robustness to hyperparameter choices contribute to its widespread popularity in various domains. Furthermore, the entropy regularization component in SAC leads to more diverse exploration, enabling the algorithm to escape local optima and find better solutions in complex environments.

Consider an unending-horizon discounted Markov decision process (MDP), described by the tuple $(\mathcal{S}, \mathcal{A}, P, r, \gamma)$. Here, $\mathcal{S}$ represents a finite set of states, $\mathcal{A}$ denotes a finite set of available actions, $P: \mathcal{S} \times \mathcal{A} \times \mathcal{S} \rightarrow [0, \infty)$ defines the transition probability distribution for the subsequent state $\mathbf{s}_{t+1} \in \mathcal{S}$ based on the current state $\mathbf{s}_{t} \in \mathcal{S}$ and action $\mathbf{a}_t \in \mathcal{A}$. Additionally, $r: \mathcal{S} \rightarrow \mathbb{R}$ represents the reward function, and $\gamma$ is the discount factor applied to the reward $r$. 
 The objective of RL is to maximize the expected total rewards, which is represented as $\sum_t \mathbb{E}_{(\mathbf{s}_t, \mathbf{a}_t) \sim \rho_\pi}[r(\mathbf{s}_t, \mathbf{a}_t)]$. Here, $\rho_\pi$ is an abbreviation for $\rho_\pi(\mathbf{s}_t, \mathbf{a}_t)$ and signifies the probability distribution of state-action pairs within the trajectory generated by a policy $\pi(\mathbf{a}_t | \mathbf{s}_t)$.
In SAC, the objective function is generalized to the maximum entropy objective which favors stochastic policies by augmenting the objective with the expected entropy of the policy which is
\begin{equation}
    J(\pi) = \sum_{t = 0}^{T} \mathbb{E}_{(\mathbf{s}_t, \mathbf{a}_t) \sim \rho_\pi}[r(\mathbf{s}_t, \mathbf{a}_t) + \nu \mathcal{H}(\pi(. | \mathbf{s}_t))],
\end{equation}
where $T$ is the length of each episode and $   \mathcal{H}(\pi(. | \mathbf{s}_t)) = -\int_{a \in \mathcal{A}} \pi( a | \mathbf{s}_t) \log(\pi( a | \mathbf{s}_t)) da$.
Also, $\nu$ is the temperature parameter and determines the importance of the entropy term against the reward term. Hence, $\nu$ controls the stochasticity of the optimal policy. For the rest of this paper, the temperature explicitly will be omitted, as it can be subsumed into the reward by scaling it by $\nu^{-1}$. Note that maximum entropy RL gradually approaches the conventional RL as $\nu \rightarrow 0$.

SAC employs three distinct neural networks (NNs): a state-value function denoted as $V$ with parameterized weight vectors $\boldsymbol{\Omega}$, a soft Q-function represented as $Q$ with weight parameters $\boldsymbol{\Sigma}$, and a policy function $\pi$ with weight parameters $\boldsymbol{\Pi}$. These function approximations are trained according to the following procedure.

   \subsubsection{Value network} The value function is defined as $ V(\mathbf{s}_t) = \mathbb{E}_{\mathbf{a}_t \sim \pi} \{ Q(\mathbf{s}_t, \mathbf{a}_t) - \log \pi(\mathbf{a}_t \rvert \mathbf{s}_t) \}$
    where $Q(\mathbf{s}_t, \mathbf{a}_t)$ is the Q-function and will be defined in the next part. The value network
    should be trained by minimizing the following error
    \begin{align} \nonumber
        J_V(\boldsymbol{\Omega}) = \mathbb{E}_{\mathbf{s}_t \sim \mathcal{D}} \Big[ \frac{1}{2} \big( V_{\boldsymbol{\Omega}}(\mathbf{s}_t) &- \mathbb{E}_{\mathbf{a}_t \sim \pi_{\boldsymbol{\Pi}}} \big[ Q_{\boldsymbol{\Sigma}}(\mathbf{s}_t, \mathbf{a}_t) \\ \label{value_new}
        &- \log \pi_{\boldsymbol{\Pi}}(\mathbf{a}_t \rvert \mathbf{s}_t)\big]  \big)^2 \Big].
    \end{align}
    The meaning of (\ref{value_new}) is that across all the states that are sampled from the replay buffer $\mathcal{D}$, the value network should be trained with respect to the squared difference between the prediction of the value network and the expected prediction of the Q-function plus the entropy of the policy network.
    Furthermore, the below approximation of the gradient of the $J_V(\boldsymbol{\Omega})$ is used to update the parameters of the $V$ function
    \begin{equation} \label{n_v}
        \hat{\nabla}_{\boldsymbol{\Omega}}J_V(\boldsymbol{\Omega}) =  \nabla_{\boldsymbol{\Omega}}V_{\boldsymbol{\Omega}}(\mathbf{s}_t) (V_{\boldsymbol{\Omega}}\left(\mathbf{s}_t) - Q_{\boldsymbol{\Sigma}} + \log \pi_{\boldsymbol{\Pi}}(\mathbf{a}_t \rvert \mathbf{s}_t) \right).
    \end{equation}
 \subsubsection{Q-network} The Q-network is trained by minimizing the following error  
\begin{equation}
    J_{Q}(\boldsymbol{\Sigma}) = \mathbb{E}_{(\mathbf{s}_t, \mathbf{a}_t) \sim \mathcal{D}} \left[ \frac{1}{2} \left(  Q_{\boldsymbol{\Sigma}}(\mathbf{s}_t, \mathbf{a}_t) - \hat{Q}_{\boldsymbol{\Sigma}}(\mathbf{s}_t, \mathbf{a}_t) \right)^2 \right],
\end{equation}
 where   $ \hat{Q}_{\boldsymbol{\Sigma}}(\mathbf{s}_t, \mathbf{a}_t) = r(\mathbf{s}_t, \mathbf{a}_t) + \gamma \mathbb{E}_{\mathbf{s}_{t+1} \sim p} \left[ V_{\bar{\boldsymbol{\Omega}}}(\mathbf{s}_{t+1}) \right]$,
which can be optimized with a stochastic gradient
\begin{align} \nonumber
    \hat{\nabla}_{\boldsymbol{\Sigma}} J_{Q}(\boldsymbol{\Sigma}) &= \nabla_{\boldsymbol{\Sigma}} Q_{\boldsymbol{\Sigma}}(\mathbf{s}_t, \mathbf{a}_t) \big( Q_{\boldsymbol{\Sigma}}(\mathbf{s}_t, \mathbf{a}_t) \\ \label{grad_q}
    &- r(\mathbf{s}_t, \mathbf{a}_t) - \gamma V_{\bar{\boldsymbol{\Omega}}}(\mathbf{s}_{t+1})  \big).
\end{align}
In (\ref{grad_q}), target value network $V_{\bar{\boldsymbol{\Omega}}}$ is used for the update, where $\bar{\boldsymbol{\Omega}}$ can be an exponentially moving average of the network weights. 
 \subsubsection{Policy network} The policy network is trained by minimizing the following error 
\begin{equation} \label{policy}
    J_{\pi}(\boldsymbol{\Pi}) = \mathbb{E}_{\mathbf{s}_t \sim \mathcal{D}} \left[ D_{\mathrm{KL}} \left( \pi_{\boldsymbol{\Pi}}(. \rvert \mathbf{s}_t) \Bigg{\rVert} \frac{\exp({Q_{\boldsymbol{\Sigma}}(\mathbf{s}_t, .))}}{Z_{\boldsymbol{\Sigma}}(\mathbf
    {s}_t)}\right) \right],
\end{equation}
where $D_{\mathrm{KL}}(P \rVert Q)$ is the Kullback–Leibler (KL) divergence between two probability distributions $P$ and $Q$.
 Note that the KL divergence quantifies how much one probability distribution differs from another probability distribution. If two distributions perfectly match, the KL divergence would be $0$, otherwise it can take values between $0$ and $\infty$.
Hence, the aim of the objective function (\ref{policy}) is to make the distribution of the policy function like the distribution of the exponentiation of the Q-function normalized by another function $Z$. 
In order to minimize the objective function, the authors of \cite{haarnoja2018soft} use a reparameterization trick. In this method, the error back-propagation is ensured by making the sampling process differentiable from the policy. The policy is parameterized as $ \mathbf{a}_t = f_{\mathbf{\Pi}}(\epsilon_t; \mathbf{s}_t)$
where $\epsilon_t$ is a noise vector sampled from a Gaussian distribution.
In this case, the objective function could be written as follows
\begin{align} \nonumber
    J_{\pi}(\boldsymbol{\Pi}) &= \mathbb{E}_{\mathbf{s}_t \sim \mathcal{D}, \epsilon_t \sim \mathcal{N}(0, \bar{\sigma}_\epsilon)} \Big[  \log \pi_{\boldsymbol{\Pi}}(f_{\mathbf{\Pi}}(\epsilon_t; \mathbf{s}_t) \rvert \mathbf{s}_t) \\ \label{trick}
    &- Q_{\boldsymbol{\Sigma}}(\mathbf{s}_t, f_{\mathbf{\Pi}}(\epsilon_t; \mathbf{s}_t)) \Big].
\end{align}
The normalization function $Z$ is removed since it is independent of the parameters $\boldsymbol{\Pi}$.

The unbiased estimator for the gradient of (\ref{trick}) is given by
\begin{align} \nonumber
    \hat{\nabla}_{\boldsymbol{\Xi}} J_{\pi}(\boldsymbol{\Pi}) &= \nabla_{\boldsymbol{\Pi}} \log \pi_{\boldsymbol{\Pi}}(\mathbf{a}_t \rvert \mathbf{s}_t) + \Big( \nabla_{\mathbf{a}_t} \log \pi_{\boldsymbol{\Pi}}(\mathbf{a}_t \rvert \mathbf{s}_t) \\ \label{n_pi}
    &- \nabla_{\mathbf{a}_t} Q(\mathbf{s}_t, \mathbf{a}_t)   \Big) \nabla_{\boldsymbol{\Pi}}f_{\mathbf{\Pi}}(\epsilon_t; \mathbf{s}_t),
\end{align}
where $\mathbf{a}_t$ is evaluated at $f_{\mathbf{\Pi}}(\epsilon_t; \mathbf{s}_t)$.
Note that this algorithm uses two Q-functions to mitigate positive bias in the policy improvement step, which is known to degrade the value-based method's performance. Thus, two Q-functions are parameterized and trained independently, and then the minimum Q-function for the value gradient in (\ref{n_v}) and the policy gradient in (\ref{n_pi}) is used.

 \subsection{ SAC-based Solution}
Generally, SAC is presented as an MDP with observation and action spaces. In the active RIS design problem, the APs, the active RIS, and all the users in the system are denoted by the environment $\mathcal{E}$, while the agent is the CPU that is able to control the active RIS. The following gives the key SAC elements employed to solve the optimization problem (\ref{eq:opt_sum}).
\subsubsection{Observation space}
At each timestep $t$, the observation space is the phases of all the RIS elements $\psi_{n}$. Since $\gamma_{m, k}$ contains the information of both $\kappa_{m, k}$ and $c_{m, k}$, the second part of the conservation space is $\gamma_{m, k}$. Hence, the observation shape is $ N + MK$.
\subsubsection{Action space}
At each timestep, $t$ the action space is the vector containing the phase parts of the phase shifts of the RISs. Thus, the action shape is $N$ and the action range is $[0, 2\pi)$. Noticing $\mathrm{tanh}(.)$ as the activation function for the final layer, which produces the values between $-1$ and $+1$, for converting the result to the desired action range, it is sufficient to set $\mathbf{a}_t = \pi(\mathbf{a}_t^{\prime}+1)$ where $\mathbf{a}_t^{\prime}$ is the output of $\mathrm{tanh}(.)$ activation layer. 
\subsubsection{Reward function} At each timestep $t$, considering the optimization objective in (\ref{eq:opt_sum}), the reward $r_t$ is the sum SE of all the users, i.e.,
$
    r_t = \sum_{k=1}^K R_k(t)
$
where $R_k(t)$ is the SE of user $k$ at time step $t$.

The details of the proposed SAC algorithm for active RIS beamforming are given in \cite{eskandari2023two} and presented in Algorithm~\ref{alg_1}.

\begin{algorithm}
\caption{Soft Actor-Critic}
\label{alg_1}
\SetAlgoLined
 \textbf{Initialization:} Initialize time, states, actions, and replay buffer $\mathcal{D}$ for storing the random states, actions, and rewards in each time step. 
Initialize parameter vectors $\boldsymbol{\Omega}$, $\bar{\boldsymbol{\Omega}}$, $\boldsymbol{\Sigma}$, $\boldsymbol{\Pi}$\;
 \For{\text{each iteration}}{
  Initialize the environment $\mathcal{E}$ and make the initial state $s_0$ \;
  \For{\text{each environment step}}{
  Select the action $\mathbf{a}_t \sim \pi_{\boldsymbol{\Pi}}(\mathbf{s}_t \rvert \mathbf{a}_t)$ \;
  
    Observe the next state $s_{t+1}$ and reward $r_t$, then, store transition set
    $\{\mathbf{s}_t, \mathbf{a}_t, r_t, \mathbf{s}_{t+1} \}$ into $\mathcal{D}$ \;
  
  }
  
  \For{\text{each gradient step}}{
  $\boldsymbol{\Omega} \leftarrow \boldsymbol{\Omega} - \lambda_{V} \hat{\nabla}_{\boldsymbol{\Omega}}J_V(\boldsymbol{\Omega})$ where $\hat{\nabla}_{\boldsymbol{\Omega}}J_V(\boldsymbol{\Omega}) $ is obtained in (\ref{n_v});
  
  $\boldsymbol{\Sigma}_i \leftarrow \boldsymbol{\Sigma}_i - \lambda_{Q} \hat{\nabla}_{\boldsymbol{\Sigma}_i} J_{Q}(\boldsymbol{\Sigma}_i)$ for $i \in \{1, 2 \}$ where $\hat{\nabla}_{\boldsymbol{\Theta}_i} J_{Q}(\boldsymbol{\Theta}_i)$ is obtained in (\ref{grad_q});
  
  $\boldsymbol{\Pi} \leftarrow \boldsymbol{\Pi} - \lambda_{\pi}  \hat{\nabla}_{\boldsymbol{\Pi}} J_{\pi}(\boldsymbol{\Pi})$ where $ \hat{\nabla}_{\boldsymbol{\Pi}} J_{\pi}(\boldsymbol{\Pi})$ is obtained in (\ref{n_pi});
  
  $\bar{\boldsymbol{\Omega}} \leftarrow \bar{\tau}\bar{\boldsymbol{\Omega}} + (1-\bar{\tau}\bar{\boldsymbol{\Omega}})$;
  }
 }
\end{algorithm}

\section{Simulation Results} \label{secVI}
In this section, the effectiveness of the proposed two-timescale architecture is evaluated via simulations. 
A circle-shaped geographic area of radius $500 \mathrm{ m}$ is considered. The APs and the active RIS are located on the diameter of the circle and the users are located inside the circle. The carrier frequency is set to be $1.9 \mathrm{GHz}$. Also, each coherence interval comprises $\tau_c = 200$. For the channels, the large-scale path loss is calculated as $\beta_{m, k} = 10^{-3} (d_{m, k})^{- \bar{\beta} }$, $ \alpha_{m} = 10^{-3} (d_{m})^{- \bar{\alpha}_1 }$, and $\Bar{\alpha}_{k} = 10^{-3} (\Bar{d}_{ k})^{- \bar{\alpha}_2 }$ where $d_{m, k}$ denotes the distance between AP $m$ and user $k$, $d_{m}$ is the distance between the AP~$m$ and the active RIS, and $\Bar{d}_{k}$ represents the distance between user~$k$ and the active RIS, and  Also, the AP-user link path loss exponent, AP-active RIS link path loss exponent, and active RIS-user link path loss exponent are respectively given by 
$ \bar{\beta} = 4$, $ \bar{\alpha}_1 = 2.5 $ and $ \bar{\alpha}_2 = 2.5$. The active and passive noise power is also set to be $\sigma^2 = \Bar{\sigma}^2 = -80 \mathrm{dBm}$. The circuit power consumption is set to be $P_{\mathrm{C}} = -10 \mathrm{dBm}$ and the DC power consumption of each active RIS element is $P_{\mathrm{DC}} = -5 \mathrm{dBm}$. Also, the power consumption of the RIS  is set to be $P_{\mathrm{ARIS}} = 30 \mathrm{dBm}$. The amplifier efficiency is set to be $\xi = 0.8$.  Moreover, the hyperparameters for the SAC algorithm are listed in Table \ref{hp}.

\begin{table}[ht]
\caption{Reinforcement Learning Hyperparameters} 
\centering 
\begin{tabular}{l r } 
\hline\hline 
Parameter & Value \\ [0.5ex] 
\hline 
Learning Rate & $1 \times 10^{-3}$ \\
Discount Factor ($\gamma$) & 0.99 \\
Soft Target Update Rate ($\tau$) & 0.005 \\
Entropy Coefficient ($\alpha$) & 0.2 \\
Batch Size & 64 \\
Replay Buffer Size & 32,000 \\
Neural Network Architecture & Deep neural network \\
Number of Hidden Units & 64 \\
Exploration Noise ($\bar{\sigma}_\epsilon$) & 0.1 \\
Maximum Episode Length & 400 \\
Number of Training Episodes & 2,000 \\
 [1ex]
\hline 
\end{tabular}
\label{hp}
\end{table}

\begin{figure}
    \centering
    \includegraphics[scale = 0.55]{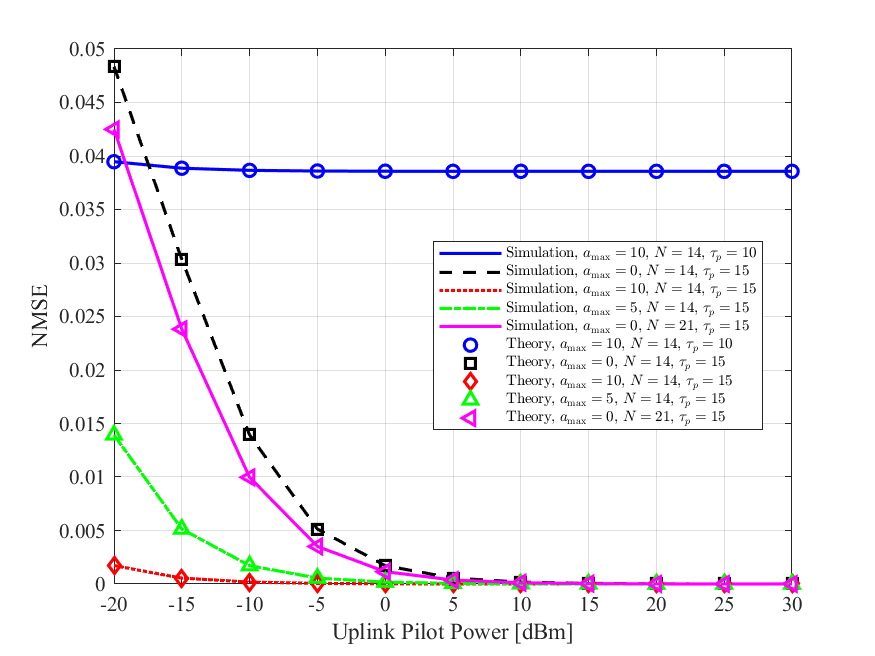}
    \caption{Comparison of the normalized mean square error performance versus
the transmit power~$\rho$ with $K = 15$.}
    \label{fig:nmse}
\end{figure}

Fig~\ref{fig:nmse} shows the NMSE as a function of the uplink power budget of each user. It can be seen from Fig~\ref{fig:nmse} that increasing the amplitude gain of the RIS results in improving the NMSE. Furthermore, increasing the amplitude gain of the RIS is more effective than increasing the number of RIS elements. Also, the effect of pilot contamination could not be canceled by increasing the transmit power. 

\begin{figure}
    \centering
    \includegraphics[scale = 0.55]{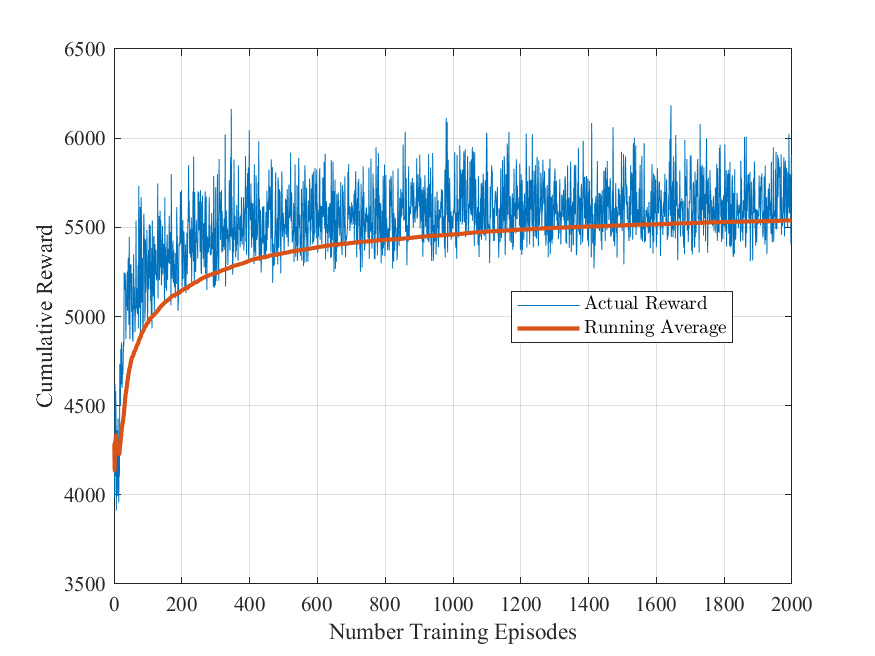}
    \caption{Illustration of the cumulative reward of each episode as a function of training episodes.}
    \label{fig:rew}
\end{figure}

Fig. \ref{fig:rew} shows the cumulative reward of each episode as a function of the number of training episodes. As depicted in the figure, the SAC agent exhibits a significant improvement in cumulative reward as it gains experience through training episodes. The learning curve indicates a clear upward trend, which suggests that the agent is making progressively better decisions in the task of designing RIS phase shifters. Upon reaching approximately $500$ training episodes, a remarkable trend emerges. At this point, the cumulative reward stabilizes and reaches a plateau, indicating that the agent has learned the essential strategies and policies required for efficient RIS phase shifter design.

\begin{figure}
    \centering
    \includegraphics[scale = 0.55]{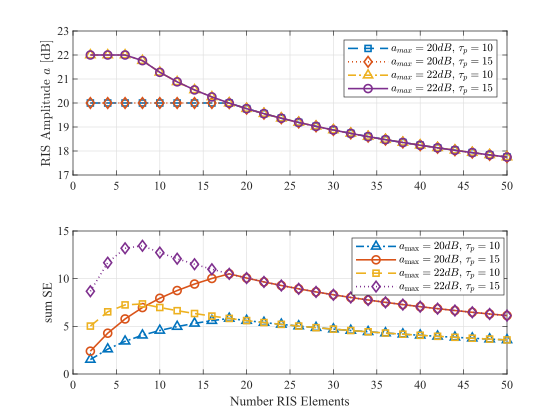}
    \caption{Comparison of the sum SE as a function of the number of RIS elements with $K = 15$.}
    \label{fig:ris}
\end{figure}

Fig.~\ref{fig:ris} depicts the sum SE as a function of the number of RIS elements. It is evident from Fig.~\ref{fig:ris} that when the number of RIS elements increases, initially, the total sum SE also increases. This occurs because, in this scenario, the value of $a \geq a_{\max}$. Consequently, the amplitude gain of the RIS remains fixed at $a_{\max}$.
However, as the number of RIS elements continues to increase, it triggers the $\min$ operator mentioned in  (\ref{a_min}). Since the amplitude gain $a$ in  (\ref{a_nb}) is inversely proportional to $N$, further increments in $N$ result in a reduction in the amplitude gain. Consequently, this leads to a decrease in sum SE as well.

\begin{figure}
    \centering
    \includegraphics[scale = 0.55]{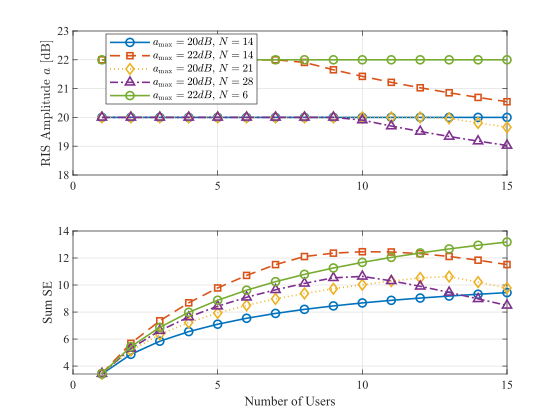}
    \caption{Comparison of the sum SE as a function of the number of users with $M = 20$ and $\tau_p = 15$.}
    \label{fig:user}
\end{figure}

In Fig.~\ref{fig:user} the sum SE as a function of the number of users is presented. It is observed that by increasing the number of users the sum SE is increased, however, by further increasing the number of users the amplitude gain starts to decrease which results in decreasing the sum SE. This is because, similar to the number of RIS elements, the number of users is also inversely proportional to the number of users. 

\begin{figure}
    \centering
    \includegraphics[scale = 0.55]{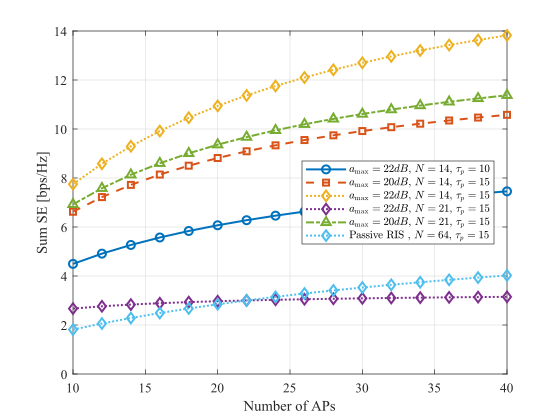}
    \caption{Comparison of sum SE as a function of number of APs with $K = 15$}
    \label{fig:ap}
\end{figure}
Fig.~\ref{fig:ap} shows the sum SE as a function of the number of APs. It can be seen from Fig.~\ref{fig:ap} that by increasing the number of APs, the sum SE also increases. Furthermore, increasing the amplitude gain of the RIS is more effective in improving the sum SE rather than increasing the number of elements. Also, it can be seen that for achieving the same sum SE, less number of APs are required which shows the effectiveness of deploying the RIS panel in a cell-free system. Finally, the effect of pilot contamination could not be mitigated even by increasing the number of APs.  

\begin{figure}
    \centering
    \includegraphics[scale = 0.55]{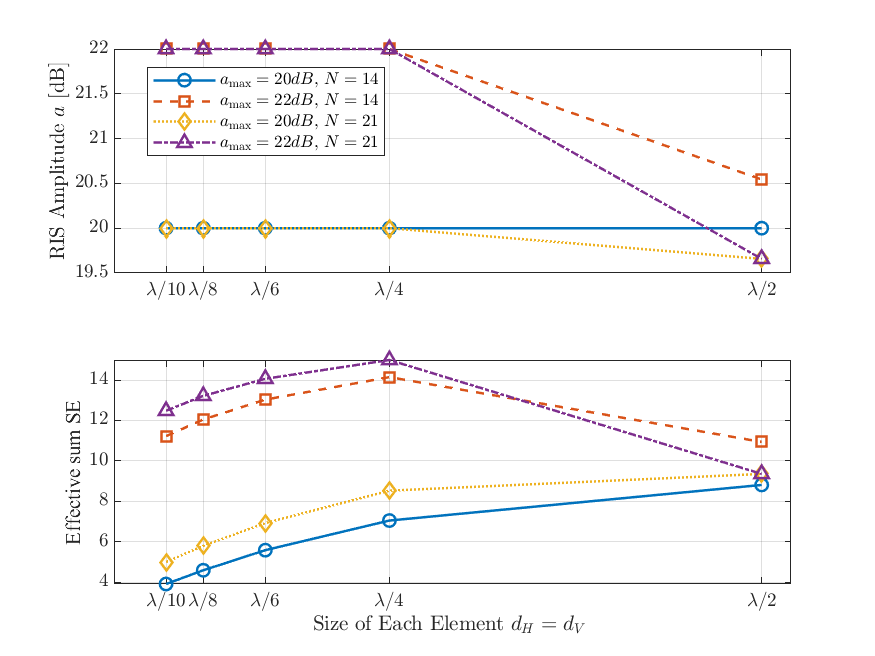}
    \caption{Comparison of the sum SE as a function of the size of each RIS element with  $K = \tau_p = 15$ and $M = 20$.}
    \label{fig:size}
\end{figure}

In Fig.~\ref{fig:size}, the sum SE of the system is presented as a function of the size of each element. It is obvious from Fig.~\ref{fig:size} that by increasing the size of each element, the sum SE also improves. However, by further increasing the size of each element, the amplitude gain of the RIS begins to decrease, the reason is, as shown in (\ref{a_nb}), the size of RIS elements is inversely proportional to the amplitude gain of the RIS and increment of the size of the RIS results in triggering the $\min$ operator in (\ref{a_min}) and the amplitude gain start to decrease. 

\begin{figure}
    \centering
    \includegraphics[scale = 0.55]{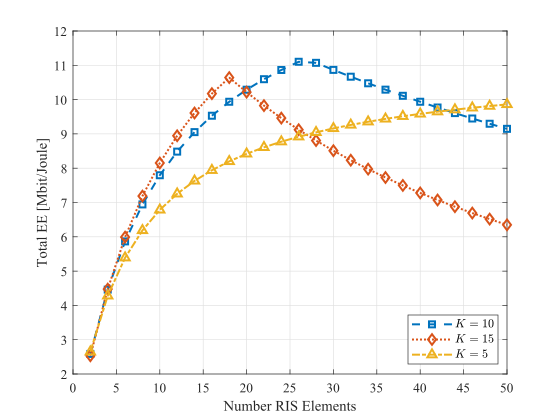}
    \caption{Comparison of sum SE as a function of the number of APs with $a_{\max} = 20 \mathrm{dB}$ and $\tau_p = 15$}
    \label{fig:ee}
\end{figure}

The system's total energy efficiency (EE) comparison is depicted in Fig. \ref{fig:ee} with the total bandwidth of the system being $B = 20 \mathrm{MHz}$.  The proof of EE details can be found in Appendix C. Fig. \ref{fig:ee} indicates that as the number of RIS elements increases, there is an improvement in EE. However, the EE starts to decrease when the amplitude gain of the RIS diminishes due to (\ref{a_min}). Conversely, an initial enhancement in EE is noted with an increasing number of users. Nevertheless, because the number of users is inversely proportional to the amplitude of the RIS, a higher user count leads to a sooner decline in EE.

\section{Conclusion} \label{secVII}
In this paper, the uplink performance of an active RIS-aided cell-free network was explored, accounting for channel estimation inaccuracies. An LMMSE estimator was proposed for estimating the user-AP channels via RIS and derived a closed-form expression for uplink SE. 
The closed-form expression of the amplitude gain of the RIS was derived and its relation with key parameters of the system such as the number of RIS elements is analyzed.  Furthermore, a SAC-based solution 
for designing the phase shift of each element was proposed.  
The following conclusions can be drawn from the results.
\begin{enumerate}
    \item The impact of pilot contamination cannot be ignored even with the high amplitude gain of the active RIS. 
    \item To achieve the same performance, a smaller number of APs is required by using a higher number of RIS elements or higher amplitude gain of the active RIS. Also, in comparison with passive RIS, less number of RIS elements are needed to achieve higher performance gain.
    \item The number of active RIS elements should be designed carefully in each scenario since further increasing the number of reflecting elements results in decreasing the amplitude gain of the RIS.
\end{enumerate}

\section*{Appendix A:  Some Useful Results}
Assume $\mathbf{x} \sim \mathcal{CN}(\mathbf{0}, \mathbf{R}) \in \mathbb{C}^{N \times 1}$, in this case, $\mathbf{W} = \mathbf{x} \mathbf{x}^H$, has a complex central Wishart distribution with 1 degree of freedom and associated
covariance matrix $\mathbf{R}$ which is denoted by $\mathbf{W} \sim \mathcal{CW}_N(\mathbf{R}, 1)$ \cite{tague1994expectations}. 
\begin{lem} {\cite[Eq. (19)]{tague1994expectations}} \label{wishart}
    For $\mathbf{W} \sim \mathcal{CW}_N(\mathbf{R}, 1)$ and a deterministic matrix $\mathbf{A}$, $\mathbb{E}\{\mathbf{W} \mathbf{A} \mathbf{W} \} = \mathbf{R} \mathbf{A} \mathbf{R} + \mathrm{tr}\{ \mathbf{A} \mathbf{R}\}\mathbf{R}$
\end{lem}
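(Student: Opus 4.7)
The goal is to establish the quadratic moment identity $\mathbb{E}\{\mathbf{W}\mathbf{A}\mathbf{W}\} = \mathbf{R}\mathbf{A}\mathbf{R} + \mathrm{tr}\{\mathbf{A}\mathbf{R}\}\mathbf{R}$ when $\mathbf{W}=\mathbf{x}\mathbf{x}^H$ with $\mathbf{x}\sim\mathcal{CN}(\mathbf{0},\mathbf{R})$. The plan is to reduce the expression to a fourth-order moment of a standard complex Gaussian vector by whitening, evaluate that moment elementwise with Isserlis' theorem, and then undo the whitening.

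First, unfold the product: $\mathbf{W}\mathbf{A}\mathbf{W}=\mathbf{x}\mathbf{x}^H\mathbf{A}\mathbf{x}\mathbf{x}^H$, where the inner factor $s=\mathbf{x}^H\mathbf{A}\mathbf{x}$ is a scalar. Thus we need $\mathbb{E}\{s\,\mathbf{x}\mathbf{x}^H\}$. I would introduce a square root $\mathbf{L}$ with $\mathbf{L}\mathbf{L}^H=\mathbf{R}$ and write $\mathbf{x}=\mathbf{L}\mathbf{y}$ with $\mathbf{y}\sim\mathcal{CN}(\mathbf{0},\mathbf{I}_N)$. Setting $\mathbf{B}=\mathbf{L}^H\mathbf{A}\mathbf{L}$, the quantity to compute becomes
\begin{equation}
\mathbb{E}\{\mathbf{W}\mathbf{A}\mathbf{W}\}=\mathbf{L}\,\mathbb{E}\bigl\{(\mathbf{y}^H\mathbf{B}\mathbf{y})\,\mathbf{y}\mathbf{y}^H\bigr\}\,\mathbf{L}^H,
\end{equation}
so the problem reduces to a fourth-order moment for the i.i.d.\ complex Gaussian vector $\mathbf{y}$.

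Next, I would compute the inner expectation entrywise. Its $(i,j)$ entry equals $\sum_{k,\ell} B_{k\ell}\,\mathbb{E}\{y_i y_\ell y_k^* y_j^*\}$. For zero-mean circularly symmetric complex Gaussians, Isserlis'/Wick's theorem gives $\mathbb{E}\{y_i y_\ell y_k^* y_j^*\}=\delta_{ik}\delta_{\ell j}+\delta_{ij}\delta_{\ell k}$ (the pairings involving two unconjugated or two conjugated entries vanish). Summing yields $B_{ij}+\delta_{ij}\,\mathrm{tr}(\mathbf{B})$, i.e.\ $\mathbb{E}\{(\mathbf{y}^H\mathbf{B}\mathbf{y})\mathbf{y}\mathbf{y}^H\}=\mathbf{B}+\mathrm{tr}(\mathbf{B})\mathbf{I}_N$.

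Finally, substituting back and using the cyclic property of the trace, $\mathrm{tr}(\mathbf{B})=\mathrm{tr}(\mathbf{L}^H\mathbf{A}\mathbf{L})=\mathrm{tr}(\mathbf{A}\mathbf{L}\mathbf{L}^H)=\mathrm{tr}(\mathbf{A}\mathbf{R})$, while $\mathbf{L}\mathbf{B}\mathbf{L}^H=\mathbf{L}\mathbf{L}^H\mathbf{A}\mathbf{L}\mathbf{L}^H=\mathbf{R}\mathbf{A}\mathbf{R}$ and $\mathbf{L}\mathbf{L}^H=\mathbf{R}$, which gives exactly the stated identity. The only step requiring care is the complex Gaussian fourth-order moment, since in the complex (as opposed to real) case only the two mixed pairings survive; I would note this explicitly to justify dropping the $\delta_{i\ell}\delta_{kj}$-type term that would otherwise appear in the real-Gaussian analogue. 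No step involves heavy computation, so the proof is essentially a transparent chain of substitutions once Wick's theorem is invoked.
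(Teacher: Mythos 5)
Your proof is correct. Note, however, that the paper does not actually prove this lemma: it is imported verbatim as Eq.~(19) of the cited reference (Tague and Caldwell, 1994), so there is no in-paper argument to compare against. Your derivation --- whitening $\mathbf{x}=\mathbf{L}\mathbf{y}$ with $\mathbf{L}\mathbf{L}^H=\mathbf{R}$, reducing $\mathbb{E}\{\mathbf{W}\mathbf{A}\mathbf{W}\}$ to the fourth-order moment $\mathbb{E}\{(\mathbf{y}^H\mathbf{B}\mathbf{y})\,\mathbf{y}\mathbf{y}^H\}=\mathbf{B}+\mathrm{tr}(\mathbf{B})\mathbf{I}_N$ via the circular-Gaussian Wick identity $\mathbb{E}\{y_i y_\ell y_k^* y_j^*\}=\delta_{ik}\delta_{\ell j}+\delta_{ij}\delta_{\ell k}$, and then undoing the whitening --- is the standard self-contained route to this identity, and every step checks out; in particular you correctly drop the unpaired-conjugation terms that would survive in the real case, and your argument needs no Hermitian assumption on $\mathbf{A}$ and works even for semidefinite $\mathbf{R}$ (any factor $\mathbf{L}$ with $\mathbf{L}\mathbf{L}^H=\mathbf{R}$ suffices). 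In effect you have supplied a proof the paper delegates to the literature, which is a strictly more complete treatment than the paper's citation.
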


\begin{lem}
    The second and the fourth moment of the aggregated channel can be written as
    \begin{align}
        \mathbb{E}\{ |q_{m, k}|^2 \} &= \kappa_{m, k}, \\ \label{e_qq4}
        \mathbb{E}\{ |q_{m, k}|^4 \} &= 2 \kappa_{m, k}^2 + 2 \mathrm{tr}\{ \boldsymbol{\Xi}_{m, k}^2 \}.
    \end{align}
    where $\kappa_{m, k} \triangleq \beta_{m, k} + \mathrm{tr}\{ \boldsymbol{\Xi}_{m, k} \}   $ and $\boldsymbol{\Xi}_{m, k} \triangleq \boldsymbol{\Theta}  \Bar{\mathbf{R}}_k \boldsymbol{\Theta}^H \mathbf{R}_m $. Additionally, the aggregated channels $q_{m, k}$ and $q_{m^{\myprime}k}$ for mutually uncorrelated for $\forall m \neq m^{\myprime}$. Furthermore, aggregated channels $q_{m, k}$ and $q_{m, k^{\myprime}}$ are also mutually uncorrelated for $\forall k \neq k^{\myprime}$. Furthermore, the following conditions hold
    \begin{align}
         &\mathbb{E}\{ |q_{m, k} q_{m^{\myprime}, k^{\myprime}}^*|^2 \} = \kappa_{m, k} \kappa_{m^{\myprime}, k^{\myprime}}\\ \nonumber
         & \hspace{15mm}, \forall m \neq m^{\myprime},  k \neq k^{\myprime}, \\ \label{eqqqq}
         &\mathbb{E}\{ q_{m, k}^* q_{m, k^{\myprime}} q_{m^{\myprime}, k^{\myprime}}^*  q_{m^{\myprime}, k}  \} = \mathrm{tr}\{ \boldsymbol{\Xi}_{m, k^{\myprime}} \boldsymbol{\Xi}_{m^{\myprime}, k} \} \\ \nonumber
         & \hspace{15mm}, \forall m \neq m^{\myprime},  k \neq k^{\myprime}, \\
         &\mathbb{E}\{ |q_{m, k} q_{m^{\myprime}, k}^*|^2 \} =  \kappa_{m, k}\kappa_{m^{\myprime}, k} + \mathrm{tr}\{ \boldsymbol{\Xi}_{m, k} \boldsymbol{\Xi}_{m^{\myprime}, k} \}\\ \nonumber
         & \hspace{15mm}, \forall m \neq m^{\myprime}, \\ \label{E_qmkqmkk}
         &\mathbb{E}\{ |q_{m, k} q_{m, k^{\myprime}}^*|^2 \} =  \kappa_{m, k}\kappa_{m, k^{\myprime}} + \mathrm{tr}\{ \boldsymbol{\Xi}_{m, k} \boldsymbol{\Xi}_{m, k^{\myprime}} \}\\ \nonumber
         & \hspace{15mm}, \forall k \neq k^{\myprime}.
    \end{align}
\end{lem}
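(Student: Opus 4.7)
The plan is to build every identity from the decomposition $q_{m,k} = g_{m,k} + \mathbf{h}_m^H \boldsymbol{\Theta} \mathbf{z}_k$, where $g_{m,k}$, $\mathbf{h}_m$ and $\mathbf{z}_k$ are mutually independent, zero-mean, circularly symmetric complex Gaussian, and vectors indexed by distinct $m$ (resp.\ $k$) are also mutually independent. Three tools do all the work: independence across distinct indices (it handles the uncorrelatedness claims and the clean factorisation $\mathbb{E}\{|q_{m,k}q_{m',k'}^*|^2\} = \kappa_{m,k}\kappa_{m',k'}$); circular symmetry (any monomial in which some Gaussian factor has odd order drops, and $\mathbb{E}\{\mathbf{z}_k\mathbf{z}_k^T\} = \mathbf{0}$ also kills $\mathbb{E}\{(\mathbf{h}_m^H\boldsymbol{\Theta}\mathbf{z}_k)^2\}$); and Lemma~\ref{wishart}, which resolves any expression in which $\mathbf{h}_m\mathbf{h}_m^H$ or $\mathbf{z}_k\mathbf{z}_k^H$ appears twice between two deterministic matrices.

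For the second moment I would expand $|q_{m,k}|^2 = |g_{m,k}|^2 + |\mathbf{h}_m^H\boldsymbol{\Theta}\mathbf{z}_k|^2 + 2\Re\{g_{m,k}^*\mathbf{h}_m^H\boldsymbol{\Theta}\mathbf{z}_k\}$; the cross term vanishes by the independence and zero mean of $g_{m,k}$, and the middle term, after iterated expectation (first on $\mathbf{z}_k$, then on $\mathbf{h}_m$), collapses to $\mathrm{tr}\{\boldsymbol{\Xi}_{m,k}\}$, giving $\kappa_{m,k}$. The fourth moment is the main obstacle. Setting $Y = \mathbf{h}_m^H\boldsymbol{\Theta}\mathbf{z}_k$ and squaring $|g_{m,k}+Y|^2$, every term with an unpaired $g_{m,k}$ drops, and $\mathbb{E}\{g_{m,k}^2\} = 0$ together with $\mathbb{E}\{Y^2\} = 0$ eliminates the pseudo-covariance squares. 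What survives is $\mathbb{E}\{|g_{m,k}|^4\} + 4\mathbb{E}\{|g_{m,k}|^2\}\mathbb{E}\{|Y|^2\} + \mathbb{E}\{|Y|^4\} = 2\beta_{m,k}^2 + 4\beta_{m,k}\mathrm{tr}\{\boldsymbol{\Xi}_{m,k}\} + \mathbb{E}\{|Y|^4\}$. For $\mathbb{E}\{|Y|^4\}$ I would condition on $\mathbf{h}_m$: setting $\mathbf{a} = \boldsymbol{\Theta}^H\mathbf{h}_m$, one application of Lemma~\ref{wishart} to $\mathbf{z}_k\mathbf{z}_k^H$ gives $\mathbb{E}\{|\mathbf{a}^H\mathbf{z}_k|^4\mid\mathbf{h}_m\} = 2(\mathbf{a}^H\bar{\mathbf{R}}_k\mathbf{a})^2$, and a second application to $\mathbf{h}_m\mathbf{h}_m^H$ yields $\mathbb{E}\{(\mathbf{h}_m^H\boldsymbol{\Theta}\bar{\mathbf{R}}_k\boldsymbol{\Theta}^H\mathbf{h}_m)^2\} = (\mathrm{tr}\{\boldsymbol{\Xi}_{m,k}\})^2 + \mathrm{tr}\{\boldsymbol{\Xi}_{m,k}^2\}$. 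Adding everything and completing the square in $\kappa_{m,k} = \beta_{m,k} + \mathrm{tr}\{\boldsymbol{\Xi}_{m,k}\}$ delivers the stated $2\kappa_{m,k}^2 + 2\mathrm{tr}\{\boldsymbol{\Xi}_{m,k}^2\}$.

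The uncorrelatedness of $q_{m,k}$ and $q_{m',k'}$ for distinct $m,k$ is immediate, since they are functions of disjoint independent zero-mean Gaussian vectors; the same reason splits $\mathbb{E}\{|q_{m,k}q_{m',k'}^*|^2\}$ into $\kappa_{m,k}\kappa_{m',k'}$ when both indices differ. The mixed identities with shared $m$ or shared $k$ reuse the fourth-moment template, except that only one of the two Wishart invocations is available: on the shared side, Lemma~\ref{wishart} produces precisely the additional correction $\mathrm{tr}\{\boldsymbol{\Xi}_{m,k}\boldsymbol{\Xi}_{m,k'}\}$ (or its $m$-analogue $\mathrm{tr}\{\boldsymbol{\Xi}_{m,k}\boldsymbol{\Xi}_{m',k}\}$) on top of the independence factor $\kappa_{m,k}\kappa_{m,k'}$ (resp.\ $\kappa_{m,k}\kappa_{m',k}$), while the non-shared side contributes only the product of means. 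For the cross moment in (\ref{eqqqq}) I would expand $q_{m,k}^*q_{m,k'}q_{m',k'}^*q_{m',k}$ in full; since the four scalars $g_{m,k},g_{m,k'},g_{m',k'},g_{m',k}$ are pairwise distinct and each appears at most once, every monomial containing any $g$ vanishes, leaving only $\mathbb{E}\{Y_{m,k}^*Y_{m,k'}Y_{m',k'}^*Y_{m',k}\}$. Rewriting this as a single trace and replacing the four independent outer products $\mathbf{z}_k\mathbf{z}_k^H, \mathbf{z}_{k'}\mathbf{z}_{k'}^H, \mathbf{h}_m\mathbf{h}_m^H, \mathbf{h}_{m'}\mathbf{h}_{m'}^H$ by their covariances (no Wishart step needed, since no variable is repeated inside the trace), a cyclic shift folds the result into the stated product of two $\boldsymbol{\Xi}$'s. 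The hardest part throughout is bookkeeping: identifying which cross moments vanish by circularity or by independence so that the Wishart invocations land in the right place and the surviving deterministic matrices cycle into a clean trace of $\boldsymbol{\Xi}$-products with the correct indices.
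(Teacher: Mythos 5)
Your proposal is correct, and it is worth noting that the paper itself does not actually prove this lemma: it simply points to the passive-RIS analogue in \cite[Lemma 1]{chien2020massive}. Your derivation supplies the missing argument, and it follows the route that reference (and the paper's own Appendix~A) is set up for: split $q_{m,k}=g_{m,k}+\mathbf{h}_m^H\boldsymbol{\Theta}\mathbf{z}_k$, kill odd-order and pseudo-covariance terms by circular symmetry, condition on the shared Gaussian vector, and invoke the complex Wishart identity of Lemma~\ref{wishart} once per repeated outer product. All the index cases check out: the fully distinct case factors by independence, the shared-$m$ and shared-$k$ cases each trigger exactly one Wishart correction of the form $\mathrm{tr}\{\boldsymbol{\Xi}\boldsymbol{\Xi}\}$, and the fourth moment assembles into $2\kappa_{m,k}^2+2\,\mathrm{tr}\{\boldsymbol{\Xi}_{m,k}^2\}$ after completing the square. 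The only point to flag is in the cross moment~(\ref{eqqqq}): carrying out your trace-and-cycle step literally yields $\mathrm{tr}\{\boldsymbol{\Xi}_{m,k}\boldsymbol{\Xi}_{m^{\prime},k^{\prime}}\}$, which is the complex conjugate of the stated $\mathrm{tr}\{\boldsymbol{\Xi}_{m,k^{\prime}}\boldsymbol{\Xi}_{m^{\prime},k}\}$ for general Hermitian covariances; the two coincide here because the paper takes $\mathbf{R}_m$ and $\bar{\mathbf{R}}_k$ all proportional to a common matrix $\mathbf{R}$, so the quantity is real. That is a bookkeeping subtlety inherited from the lemma's statement rather than a flaw in your argument.
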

\begin{proof}
The proof is similar to the passive RIS scenario in \cite[Lemma 1]{chien2020massive}.
\end{proof}

\begin{lem}
    Assuming $\Bar{p}_{m, k}^{(\mathrm{p})} = \mathbf{p}_m \mathbf{s}_k$ with $\mathbf{p}_m =  \mathbf{h}_m^H \boldsymbol{\Theta} \mathbf{V}$, it holds that 
    \begin{align} \nonumber
      &\mathbb{E} \{ | \Bar{p}_{m, k}^{(\mathrm{p})*}  q_{m, k} |^2 \} = \mathbb{E} \{  | \Bar{p}_{m, k^{\myprime}}^{(\mathrm{p})*}  q_{m, k} |^2 \}  \triangleq \alpha_{m, k} \\ \nonumber
      &= \beta_{m, k} \Bar{\sigma}^2\mathrm{tr}\{ \mathbf{A}^2 \mathbf{R}_m \} \\ \label{e_bq}
      &+ \Bar{\sigma}^2 \mathrm{tr}\{ \boldsymbol{\Theta} \Bar{\mathbf{R}}_k \boldsymbol{\Theta}^H ( \mathbf{R}_m \mathbf{A}^2 \mathbf{R}_m + \mathrm{tr}\{ \mathbf{A}^2 \mathbf{R}_m\}\mathbf{R}_m))  \} \} 
    \end{align}
\end{lem}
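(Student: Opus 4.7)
The plan is to exploit the mutual independence between the active-RIS thermal noise $\mathbf{V}$, the direct channel $g_{m,k}$, the AP--RIS channel $\mathbf{h}_m$, and the user--RIS channel $\mathbf{z}_k$, and reduce the computation via iterated conditional expectations. First, I would rewrite $\bar{p}_{m,k}^{(\mathrm{p})}=\mathbf{h}_m^H\boldsymbol{\Theta}\tilde{\mathbf{v}}_k$ with $\tilde{\mathbf{v}}_k\triangleq\mathbf{V}\mathbf{s}_k$. Since the columns of $\mathbf{V}$ are i.i.d.\ $\mathcal{CN}(\mathbf{0},\bar{\sigma}^2\mathbf{I}_N)$ and $\|\mathbf{s}_k\|=1$, we have $\tilde{\mathbf{v}}_k\sim\mathcal{CN}(\mathbf{0},\bar{\sigma}^2\mathbf{I}_N)$ independent of all channels. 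This observation already settles the claimed equality $\mathbb{E}\{|\bar{p}_{m,k}^{(\mathrm{p})*}q_{m,k}|^2\}=\mathbb{E}\{|\bar{p}_{m,k^{\myprime}}^{(\mathrm{p})*}q_{m,k}|^2\}$, because the quantity depends on the pilot only through the marginal distribution of $\tilde{\mathbf{v}}_k$, which is identical for every $k$.

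Next, I would treat the scalar modulus squared as the product $|\bar{p}_{m,k}^{(\mathrm{p})}|^2|q_{m,k}|^2$ and condition on $(\mathbf{h}_m,\mathbf{z}_k,g_{m,k})$. Using $\boldsymbol{\Theta}\boldsymbol{\Theta}^H=\mathbf{A}\boldsymbol{\Psi}\boldsymbol{\Psi}^H\mathbf{A}=\mathbf{A}^2$, the standard Gaussian second moment gives $\mathbb{E}\{|\bar{p}_{m,k}^{(\mathrm{p})}|^2\mid\mathbf{h}_m\}=\bar{\sigma}^2\mathbf{h}_m^H\mathbf{A}^2\mathbf{h}_m$, so the outer expectation reduces to $\bar{\sigma}^2\,\mathbb{E}\{\mathbf{h}_m^H\mathbf{A}^2\mathbf{h}_m\cdot|q_{m,k}|^2\}$. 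Expanding $|q_{m,k}|^2=|g_{m,k}|^2+|\mathbf{h}_m^H\boldsymbol{\Theta}\mathbf{z}_k|^2+2\mathrm{Re}\{g_{m,k}^*\mathbf{h}_m^H\boldsymbol{\Theta}\mathbf{z}_k\}$, the cross term vanishes because $g_{m,k}$ is zero-mean and independent of the other factors, and the $|g_{m,k}|^2$ contribution separates to yield $\beta_{m,k}\bar{\sigma}^2\mathrm{tr}\{\mathbf{A}^2\mathbf{R}_m\}$, the first summand of the claim.

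The main technical step is then the remaining quartic form $\mathbb{E}\{\mathbf{h}_m^H\mathbf{A}^2\mathbf{h}_m\cdot\mathbf{h}_m^H\boldsymbol{\Theta}\bar{\mathbf{R}}_k\boldsymbol{\Theta}^H\mathbf{h}_m\}$, obtained after first integrating out $\mathbf{z}_k$ via $\mathbb{E}\{|\mathbf{h}_m^H\boldsymbol{\Theta}\mathbf{z}_k|^2\mid\mathbf{h}_m\}=\mathbf{h}_m^H\boldsymbol{\Theta}\bar{\mathbf{R}}_k\boldsymbol{\Theta}^H\mathbf{h}_m$. I would evaluate it with Lemma~\ref{wishart} applied to $\mathbf{W}=\mathbf{h}_m\mathbf{h}_m^H\sim\mathcal{CW}_N(\mathbf{R}_m,1)$, or equivalently by the standard complex Gaussian identity $\mathbb{E}\{\mathbf{h}^H\mathbf{B}_1\mathbf{h}\,\mathbf{h}^H\mathbf{B}_2\mathbf{h}\}=\mathrm{tr}(\mathbf{B}_1\mathbf{R})\mathrm{tr}(\mathbf{B}_2\mathbf{R})+\mathrm{tr}(\mathbf{B}_1\mathbf{R}\mathbf{B}_2\mathbf{R})$ with $\mathbf{B}_1=\mathbf{A}^2$ and $\mathbf{B}_2=\boldsymbol{\Theta}\bar{\mathbf{R}}_k\boldsymbol{\Theta}^H$. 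A cyclic rearrangement of the two resulting traces factors them as $\mathrm{tr}\{\boldsymbol{\Theta}\bar{\mathbf{R}}_k\boldsymbol{\Theta}^H(\mathbf{R}_m\mathbf{A}^2\mathbf{R}_m+\mathrm{tr}\{\mathbf{A}^2\mathbf{R}_m\}\mathbf{R}_m)\}$, which, combined with the direct-channel contribution, delivers the stated $\alpha_{m,k}$. The main obstacle is pure bookkeeping: keeping the non-commuting diagonal factor $\mathbf{A}^2$ and the covariance $\mathbf{R}_m$ in the right order under the quartic identity and the trace-cyclic manipulations; no non-trivial probabilistic step beyond the Wishart lemma already recorded in Appendix~A is needed.
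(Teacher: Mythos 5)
Your proposal is correct and follows essentially the same route as the paper's proof: the cross terms vanish by independence and zero-mean of $g_{m,k}$, the direct-channel contribution gives $\beta_{m,k}\bar{\sigma}^2\mathrm{tr}\{\mathbf{A}^2\mathbf{R}_m\}$, and the quartic term $\mathbb{E}\{\mathbf{h}_m\mathbf{h}_m^H\mathbf{A}^2\mathbf{h}_m\mathbf{h}_m^H\}$ is handled by the Wishart identity of Appendix~A (your Gaussian quartic identity is the same fact in trace form). The only differences are cosmetic — you integrate out $\mathbf{V}$ and $\mathbf{z}_k$ by conditioning rather than expanding into four terms — and you additionally make explicit why the expectation is the same for $k$ and $k^{\myprime}$, which the paper asserts without comment.
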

\begin{proof}
    \begin{align} \nonumber\displaybreak[1]
         \mathbb{E} \Big\{ | \Bar{p}_{m, k}^{(\mathrm{p})*} & q_{m, k} |^2 \Big\} =  \mathbb{E} \Big\{ \rVert \mathbf{s}_k^H \mathbf{V}^H \boldsymbol{\Theta}^H \mathbf{h}_m ( g_{m, k} + \mathbf{h}_m^H \boldsymbol{\Theta} \mathbf{z}_k) \rVert^2 \Big\}  \\\nonumber \displaybreak[1]
         &=  \mathbb{E} \Big\{ \left(\mathbf{s}_k^H \mathbf{V}^H \boldsymbol{\Theta}^H \mathbf{h}_m ( g_{m, k} + \mathbf{h}_m^H \boldsymbol{\Theta} \mathbf{z}_k) \right)^H \\ \nonumber
         & \hspace{5mm}\times \left(\mathbf{s}_k^H \mathbf{V}^H \boldsymbol{\Theta}^H \mathbf{h}_m ( g_{m, k} + \mathbf{h}_m^H \boldsymbol{\Theta} \mathbf{z}_k) \right) \Big\}  \\
         \label{t_1} \displaybreak[1]
         &= \mathbb{E} \{ \mathbf{s}_k^H \mathbf{V}^H \boldsymbol{\Theta}^H \mathbf{h}_m g_{m, k} g_{m, k}^H \mathbf{h}_m^H \boldsymbol{\Theta}  \mathbf{V}  \mathbf{s}_k\} \\ \label{t_2}  \displaybreak[1]
         & + \mathbb{E} \{ \mathbf{s}_k^H \mathbf{V}^H \boldsymbol{\Theta}^H \mathbf{h}_m g_{m, k} \mathbf{z}_k^H \boldsymbol{\Theta}^H \mathbf{h}_m\mathbf{h}_m^H \boldsymbol{\Theta}  \mathbf{V}  \mathbf{s}_k\} \\  \label{t_3} \displaybreak[1]
         & + \mathbb{E} \{ \mathbf{s}_k^H \mathbf{V}^H \boldsymbol{\Theta}^H \mathbf{h}_m \mathbf{h}_m^H \boldsymbol{\Theta} \mathbf{z}_k g_{m, k}^H \mathbf{h}_m^H \boldsymbol{\Theta}  \mathbf{V}  \mathbf{s}_k\} \\   \label{t_4}\displaybreak[1]
         &+ \mathbb{E} \{ \mathbf{s}_k^H \mathbf{V}^H \boldsymbol{\Theta}^H \mathbf{h}_m \mathbf{h}_m^H \boldsymbol{\Theta} \mathbf{z}_k \mathbf{z}_k^H \boldsymbol{\Theta}^H \mathbf{h}_m \mathbf{h}_m^H \boldsymbol{\Theta}  \mathbf{V}  \mathbf{s}_k\}.
    \end{align} 
    Thanks to the independency between the channels, (\ref{t_2}) and (\ref{t_3}) are equal to zero. Furthermore, (\ref{t_1}) can be written as
    \begin{align} \nonumber
        &\mathbb{E} \{ \mathbf{s}_k^H \mathbf{V}^H \boldsymbol{\Theta}^H \mathbf{h}_m g_{m, k} g_{m, k}^H \mathbf{h}_m^H \boldsymbol{\Theta}  \mathbf{V}  \mathbf{s}_k\}  \\ \nonumber \displaybreak[1]
        &= \mathrm{tr}\{ \mathbb{E} \{ \mathbf{s}_k^H \mathbf{V}^H \boldsymbol{\Theta}^H \mathbf{h}_m g_{m, k} g_{m, k}^H \mathbf{h}_m^H \boldsymbol{\Theta}  \mathbf{V}  \mathbf{s}_k\} \} \\ \nonumber \displaybreak[1]
        &=  \mathrm{tr}\{ \mathbb{E} \{  \boldsymbol{\Theta}^H \mathbf{h}_m g_{m, k} g_{m, k}^H \mathbf{h}_m^H \boldsymbol{\Theta} \} \mathbb{E} \{ \mathbf{V}  \mathbf{s}_k  \mathbf{s}_k^H \mathbf{V}^H   \} \} \\ \nonumber \displaybreak[1]
       &= \Bar{\sigma}^2\mathrm{tr}\{ \mathbb{E} \{  \boldsymbol{\Theta}^H \mathbf{h}_m g_{m, k} g_{m, k}^H \mathbf{h}_m^H \boldsymbol{\Theta} \} \} \\ \nonumber \displaybreak[1]
       &= \Bar{\sigma}^2\mathrm{tr}\{ \mathbb{E} \{    g_{m, k} g_{m, k}^H \}  \mathbb{E} \{ \mathbf{h}_m^H \boldsymbol{\Theta} \boldsymbol{\Theta}^H \mathbf{h}_m \} \} \\ \label{t_1s} \displaybreak[1]
       &= \beta_{m, k} \Bar{\sigma}^2\mathrm{tr}\{ \mathbf{A}^2 \mathbf{R}_m \}.
    \end{align}
    Next, (\ref{t_4}) is given by
    \begin{align} \nonumber
        & \mathbb{E} \{ \mathbf{s}_k^H \mathbf{V}^H \boldsymbol{\Theta}^H \mathbf{h}_m \mathbf{h}_m^H \boldsymbol{\Theta} \mathbf{z}_k \mathbf{z}_k^H \boldsymbol{\Theta}^H \mathbf{h}_m \mathbf{h}_m^H \boldsymbol{\Theta}  \mathbf{V}  \mathbf{s}_k\} \\ \nonumber 
        &= \mathrm{tr}\{ \mathbb{E} \{ \mathbf{s}_k^H \mathbf{V}^H \boldsymbol{\Theta}^H \mathbf{h}_m \mathbf{h}_m^H \boldsymbol{\Theta} \mathbf{z}_k \mathbf{z}_k^H \boldsymbol{\Theta}^H \mathbf{h}_m \mathbf{h}_m^H \boldsymbol{\Theta}  \mathbf{V}  \mathbf{s}_k\}   \} \\ \nonumber
        &=  \mathrm{tr}\{ \mathbb{E} \{  \boldsymbol{\Theta}^H \mathbf{h}_m \mathbf{h}_m^H \boldsymbol{\Theta} \mathbf{z}_k \mathbf{z}_k^H \boldsymbol{\Theta}^H \mathbf{h}_m \mathbf{h}_m^H \boldsymbol{\Theta} \}  \mathbb{E} \{   \mathbf{V}  \mathbf{s}_k \mathbf{s}_k^H \mathbf{V}^H    \}    \} \\ \nonumber
        &= \Bar{\sigma}^2 \mathrm{tr}\{ \mathbb{E} \{    \boldsymbol{\Theta} \mathbf{z}_k \mathbf{z}_k^H \boldsymbol{\Theta}^H \} \mathbb{E} \{ \mathbf{h}_m \mathbf{h}_m^H \mathbf{A}^2 \mathbf{h}_m \mathbf{h}_m^H  \} \} \\ \label{t_4s}
        &= \Bar{\sigma}^2 \mathrm{tr}\{ \boldsymbol{\Theta} \Bar{\mathbf{R}}_k \boldsymbol{\Theta}^H ( \mathbf{R}_m \mathbf{A}^2 \mathbf{R}_m + \mathrm{tr}\{ \mathbf{A}^2 \mathbf{R}_m\}\mathbf{R}_m))  \} \},
    \end{align}
    where the last equality is obtained from Lemma~\ref{wishart}.
    
Finally, by summing up (\ref{t_1s}) and (\ref{t_4s}) we arrive at
\begin{align} \nonumber
      \mathbb{E} \{ | \Bar{p}_{m, k}^{(\mathrm{p})*} & q_{m, k} |^2 \} = \beta_{m, k} \Bar{\sigma}^2\mathrm{tr}\{ \mathbf{A}^2 \mathbf{R}_m \} \\ 
      &+ \Bar{\sigma}^2 \mathrm{tr}\{ \boldsymbol{\Theta} \Bar{\mathbf{R}}_k \boldsymbol{\Theta}^H ( \mathbf{R}_m \mathbf{A}^2 \mathbf{R}_m + \mathrm{tr}\{ \mathbf{A}^2 \mathbf{R}_m\}\mathbf{R}_m))  \} \}
\end{align}
\end{proof}

\begin{cor} \label{col1}
Assuming
\begin{align}
    o_{m, k} &= \hat{q}_{m, k}^* q_{m, k} - \mathbb{E}\{ \hat{q}_{m, k}^* q_{m, k}\}, \\
    o_{m^{\myprime}, k} &= \hat{q}_{m^{\myprime}, k}^* q_{m^{\myprime}, k} - \mathbb{E}\{ \hat{q}_{m, k}^* q_{m, k}\},
\end{align}
where $q_{m, k} = g_{m, k} + \mathbf{h}_m^H \boldsymbol{\Theta} \mathbf{z}_k$ and $\hat{q}_{m, k}$ is given in Lemma \ref{ch_est}. Then, the following holds
\begin{align}
    \mathbb{E}\{o_{m, k} o_{m^{\myprime}, k}^*\} = c_{m, k} c_{m^{\myprime}, k} \sum_{k^{\myprime} \in \mathcal{P}_k }  \mathrm{tr}\{ \boldsymbol{\Xi}_{m, k} \boldsymbol{\Xi}_{m^{\myprime}, k^{\myprime}} \}
\end{align}
\end{cor}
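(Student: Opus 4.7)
The plan is to reduce the covariance to a single fourth‑order channel moment and evaluate that moment via the identities collected in Appendix~A. Because $\gamma_{m,k}=\mathbb{E}\{\hat q_{m,k}^*q_{m,k}\}$ is real, expanding $o_{m,k}o_{m',k}^*=(\hat q_{m,k}^*q_{m,k}-\gamma_{m,k})(\hat q_{m',k}q_{m',k}^*-\gamma_{m',k})$ and taking expectations immediately collapses the mixed terms to
\[
\mathbb{E}\{o_{m,k}o_{m',k}^*\}=\mathbb{E}\{\hat q_{m,k}^*q_{m,k}\,\hat q_{m',k}q_{m',k}^*\}-\gamma_{m,k}\gamma_{m',k}.
\]
I would then substitute $\hat q_{m,k}=c_{m,k}y_{m,k}^{(\mathrm p)}$ with $y_{m,k}^{(\mathrm p)}=\sum_{k'\in\mathcal P_k}q_{m,k'}+(\rho\tau_p)^{-1/2}(\bar p_{m,k}^{(\mathrm p)}+w_{m,k}^{(\mathrm p)})$ from Lemma~\ref{ch_est}, which rewrites the surviving expectation as $c_{m,k}c_{m',k}$ times a double sum of pure channel moments plus a handful of noise cross‑terms.

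The next step is to kill the noise contributions. The AP noise $w_{m,k}^{(\mathrm p)}$ is independent of everything and zero‑mean, so any term carrying a single $w$ vanishes. For the active‑RIS noise $\bar p_{m,k}^{(\mathrm p)}=\mathbf h_m^H\boldsymbol\Theta\mathbf V\mathbf s_k$, the relevant observations are that $\mathbf V$ is independent of the channels and that $\mathbb E\{\mathbf h_m\mathbf h_{m'}^H\}=\mathbf 0$ for $m\ne m'$; together these kill every term containing a lone $\bar p$ factor or the pair $\bar p_{m,k}^{(\mathrm p)*}\bar p_{m',k}^{(\mathrm p)}$. What remains is $\sum_{k',k''\in\mathcal P_k}\mathbb{E}\{q_{m,k'}^*q_{m,k}q_{m',k''}q_{m',k}^*\}$. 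Commuting the scalar factors, this fits the template of Eq.~(\ref{eqqqq}), and a short case split yields four possibilities: (i) if $k''\notin\{k,k'\}$ then $q_{m',k''}$ introduces an unmatched zero‑mean $\mathbf z_{k''}$ and $g_{m',k''}$ that are independent of the rest and the term drops out; (ii) symmetrically for $k'\notin\{k,k''\}$; (iii) for $k'=k''=k$, Eq.~(\ref{E_qmkqmkk})--style moment gives $\kappa_{m,k}\kappa_{m',k}+\mathrm{tr}\{\boldsymbol\Xi_{m,k}\boldsymbol\Xi_{m',k}\}$; (iv) for $k'=k''\in\mathcal P_k\setminus\{k\}$, Eq.~(\ref{eqqqq}) yields $\mathrm{tr}\{\boldsymbol\Xi_{m,k}\boldsymbol\Xi_{m',k'}\}$. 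Summing (iii) and (iv) over $k'\in\mathcal P_k$ gives $\kappa_{m,k}\kappa_{m',k}+\sum_{k'\in\mathcal P_k}\mathrm{tr}\{\boldsymbol\Xi_{m,k}\boldsymbol\Xi_{m',k'}\}$, and the identity $\gamma_{m,k}=c_{m,k}\kappa_{m,k}$ makes the $\kappa_{m,k}\kappa_{m',k}$ piece cancel the subtracted $\gamma_{m,k}\gamma_{m',k}$, leaving exactly the claimed expression.

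The routine moment algebra is immediate once Appendix~A is in hand; the delicate parts are the case analysis ruling out the mixed $(k',k'')$ combinations and the verification that the shared $\mathbf V$ does not generate a surviving cross‑term between $\bar p_{m,k}^{(\mathrm p)}$ and $\bar p_{m',k}^{(\mathrm p)}$. A subtle point to watch is the case $m=m'$: the independence arguments that eliminate most $(k',k'')$ pairs rely on $\mathbf h_m$ and $\mathbf h_{m'}$ being independent, so this regime requires the self‑AP fourth‑order formulas (\ref{e_qq4}) and (\ref{E_qmkqmkk}) of Appendix~A to absorb the extra contributions and recover the same uniform expression.
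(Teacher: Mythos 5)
Your reduction of $\mathbb{E}\{o_{m,k}o_{m^{\myprime},k}^*\}$ to $\mathbb{E}\{\hat q_{m,k}^*q_{m,k}\hat q_{m^{\myprime},k}q_{m^{\myprime},k}^*\}-\gamma_{m,k}\gamma_{m^{\myprime},k}$, the substitution of the LMMSE observation, and the $(k^{\myprime},k^{\mydprime})$ case analysis via the fourth-order moments of Appendix~A are all correct, and the cancellation of $\kappa_{m,k}\kappa_{m^{\myprime},k}$ against $\gamma_{m,k}\gamma_{m^{\myprime},k}$ through $\gamma_{m,k}=c_{m,k}\kappa_{m,k}$ is exactly the right mechanism. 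Be aware that the paper gives no derivation here at all---it only points to the passive-RIS analogue in \cite{chien2020massive}---so the burden of handling the parts that are new in the active-RIS setting falls entirely on you, and that is where your argument breaks.

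The step that fails is the claim that the pair $\Bar{p}_{m,k}^{(\mathrm{p})*}\Bar{p}_{m^{\myprime},k}^{(\mathrm{p})}$ contributes nothing because $\mathbb{E}\{\mathbf{h}_m\mathbf{h}_{m^{\myprime}}^H\}=\mathbf{0}$. That observation only kills the \emph{second-order} cross-correlation of the effective RIS noises. The same matrix $\mathbf{V}$ is shared by all APs, so averaging over $\mathbf{V}$ first leaves $\mathbb{E}_{\mathbf{V}}\{\Bar{p}_{m,k}^{(\mathrm{p})*}\Bar{p}_{m^{\myprime},k}^{(\mathrm{p})}\,|\,\mathbf{h}\}=\Bar{\sigma}^2\,\mathbf{h}_{m^{\myprime}}^H\boldsymbol{\Theta}\boldsymbol{\Theta}^H\mathbf{h}_m$, a random quantity that is correlated with the reflected parts of $q_{m,k}q_{m^{\myprime},k}^*$. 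Retaining only the surviving quartic term,
\begin{align*}
\mathbb{E}\{\Bar{p}_{m,k}^{(\mathrm{p})*}q_{m,k}\Bar{p}_{m^{\myprime},k}^{(\mathrm{p})}q_{m^{\myprime},k}^*\}
&=\Bar{\sigma}^2\,\mathbb{E}\{\mathbf{h}_{m^{\myprime}}^H\boldsymbol{\Theta}\boldsymbol{\Theta}^H\mathbf{h}_m\mathbf{h}_m^H\boldsymbol{\Theta}\mathbf{z}_k\mathbf{z}_k^H\boldsymbol{\Theta}^H\mathbf{h}_{m^{\myprime}}\}\\
&=\Bar{\sigma}^2\,\mathrm{tr}\{\boldsymbol{\Theta}\boldsymbol{\Theta}^H\mathbf{R}_m\boldsymbol{\Theta}\Bar{\mathbf{R}}_k\boldsymbol{\Theta}^H\mathbf{R}_{m^{\myprime}}\}
=\Bar{\sigma}^2a^2\,\mathrm{tr}\{\mathbf{R}_m\boldsymbol{\Xi}_{m^{\myprime},k}\},
\end{align*}
which is generically strictly positive (all matrices involved are proportional to powers of the common PSD matrix $\mathbf{R}$). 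Hence the covariance acquires an additional contribution $\frac{c_{m,k}c_{m^{\myprime},k}\Bar{\sigma}^2a^2}{\rho\tau_p}\mathrm{tr}\{\mathbf{R}_m\boldsymbol{\Xi}_{m^{\myprime},k}\}$ that your proof discards; you must either carry this term (which would modify the stated corollary) or supply an explicit argument for neglecting it---the appeal to the passive-RIS case cannot do this work, since $\mathbf{V}$ does not exist there. A smaller point: the moment identities you invoke require $m\neq m^{\myprime}$, and the corollary is only ever applied in $T_{u0}$ over $m^{\myprime}\neq m$, so you should simply state that restriction rather than attempt to extend the same formula to $m=m^{\myprime}$.
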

\begin{proof}
The proof is similar to the passive RIS scenario in \cite[Corollary 1]{chien2020massive}.
\end{proof}

\section*{Appendix B} \label{appeB}

The effective uplink signal-to-noise-plus-interference ratio ($\mathrm{SINR}$) for user $k$ is given in (\ref{sinr}) on the top of next page where
   \begin{figure*}[t]
    \begin{equation}
\begin{aligned}
   \Gamma_{k} = \frac{ | E_{k}^{\mathrm{DS}} |^2 }{\mathbb{E}\{| E_{k}^{\mathrm{BU}} 
 |^2 \} + \sum_{k^{\myprime} \neq k}^{K} \mathbb{E}\{| E_{k, k^{\myprime}}^{\mathrm{UI}} 
 |^2 \}   + \mathbb{E}\{| E_{k}^{\mathrm{AN}} 
 |^2 \} + \mathbb{E}\{| E_{k}^{\mathrm{NO}} 
 |^2 \} },
    \label{sinr}
\end{aligned}
    \end{equation}
    \end{figure*}
where
\begin{align} \label{E_DS} \displaybreak[1]
    E_{k}^{\mathrm{DS}} &= \sqrt{\rho_u} \mathbb{E}\left\{ \sum_{m = 1}^{M} \hat{q}_{m, k} q_{m, k} \right\}, \\ \displaybreak[1]
    E_{k}^{\mathrm{BU}} &= \sqrt{\rho_u} \left(  \sum_{m = 1}^{M} \hat{q}_{m, k}^* q_{m, k} -  \mathbb{E}\left\{ \sum_{m = 1}^{M} \hat{q}_{m, k}^* q_{m, k} \right\} \right), \\ \displaybreak[1]
E_{k, k^{\myprime}}^{\mathrm{UI}} &= \sqrt{\rho_u}  \sum_{m = 1}^{M} \hat{q}_{m, k}^* q_{m, k^{\myprime}} , \\
    E_{k}^{\mathrm{AN}} &= \sum_{m = 1}^{M} \hat{q}_{m, k}^* p_m, \\ \displaybreak[1]
    E_{k}^{\mathrm{NO}} &= \sum_{m = 1}^{M} \hat{q}_{m, k}^* w_{m}
\end{align}

Next, (\ref{E_DS}) is calculated as follows
\begin{align} \nonumber
     \left| E_{k}^{\mathrm{DS}} \right|^2 &=  \left| \sqrt{\rho_u} \mathbb{E}\left\{ \sum_{m = 1}^{M} \hat{q}_{m, k} q_{m, k} \right\}\right|^2  \\ \label{DS}
     &= \rho_u \left| \sum_{m=1}^M \mathbb{E}\{ |\hat{q}_{m, k}|^2\} \right|
^2 = \rho_u \left( \sum_{m=1}^M \gamma_{m, k}\right)^2
\end{align}
Next, $     \mathbb{E}\{| E_{k}^{\mathrm{BU}} |^2 \} $ is calculated in (\ref{BU_u}) on the top of the next page.
   \begin{figure*}[t]
    \begin{equation}
\begin{aligned}
     \mathbb{E}\{| E_{k}^{\mathrm{BU}} |^2 \} &= \mathbb{E}\left\{ \left| \sqrt{\rho_u} \left(  \sum_{m = 1}^{M} \hat{q}_{m, k}^* q_{m, k} -  \mathbb{E}\left\{ \sum_{m = 1}^{M} \hat{q}_{m, k}^* q_{m, k}  \right\}  \right)\right|^2 \right\} = \rho_u \mathbb{E} \left\{  \left| \sum_{m = 1}^{M} o_{m, k}   \right|^2 \right\} \\ 
    &= \underbrace{\rho_u \sum_{m = 1}^{M} \sum_{m^{\myprime} = 1, m^{\myprime} \neq m }^{M} \mathbb{E} \left\{ o_{m, k} o_{m^{\myprime}, k}^*  \right\}}_{T_{u0}} + \underbrace{\rho_u \sum_{m = 1}^{M}\mathbb{E} \left\{ |o_{m, k}|^2  \right\} }_{T_{u1}}
    \label{BU_u}
\end{aligned}
    \end{equation}
    \end{figure*}
The close-form expression of $T_{u0}$ based on Corollary~\ref{col1} is as follows
\begin{align}
    T_{u0} = \sum_{m = 1}^{M} \sum_{m^{\myprime} = 1, m^{\myprime} \neq m }^{M}  \sum_{k^{\myprime} \in \mathcal{P}_k} \rho_u c_{m, k} c_{m^{\myprime}, k}  \mathrm{tr}\{ \boldsymbol{\Xi}_{m, k} \boldsymbol{\Xi}_{m^{\myprime}, k^{\myprime}} \}.
\end{align}
Next, the calculation of $T_{u1}$ goes as follows
\begin{align} \nonumber
    T_{u1} &= \rho_u \sum_{m = 1}^{M}\mathbb{E} \left\{ |o_{m, k}|^2  \right\} \\ \nonumber
    &= \rho_u \sum_{m = 1}^{M} \mathbb{E}\left\{ \left| \hat{q}_{m, k}^* q_{m, k} -  \mathbb{E}\{ \hat{q}_{m, k}^* q_{m, k} \} \right|^2 \right\} \\ \nonumber
    &= \rho_u \sum_{m = 1}^{M} \mathbb{E}\{ |\hat{q}_{m, k}^* q_{m, k}|^2 \} - \rho_u \sum_{m = 1}^{M} \left| \mathbb{E}\{ \hat{q}_{m, k}^* q_{m, k}\} \right|^2 \\ 
    &= \rho_u \sum_{m = 1}^{M} \mathbb{E}\{ |\hat{q}_{m, k}^* q_{m, k}|^2 \} - \rho_u \sum_{m = 1}^{M} \gamma_{m, k}^2.
\end{align}
Furthermore, $ \mathbb{E}\{ |\hat{q}_{m, k}^* q_{m, k}|^2 \}$ could be calculated in (\ref{E_qq2}) at the top of next page.
   \begin{figure*}[t]
    \begin{equation}
\begin{aligned}
    &\mathbb{E}\{ |\hat{q}_{m, k}^* q_{m, k}|^2 \} =  \mathbb{E}\{ |c_{m, k} y_{m, k}^{(\mathrm{p})*} q_{m, k}|^2 \} = c_{m, k}^2 \mathbb{E}\left\{ \left|\left((\sum_{k^{\myprime} \in \mathcal{P}_k}  q_{m, k^{\myprime}}^*  + \frac{1}{\sqrt{\rho \tau_p}} \Bar{p}_{m, k}^{(\mathrm{p})*} + \frac{1}{\sqrt{\rho \tau_p}} w_{m, k}^{(\mathrm{p})*} \right) q_{m, k}\right|^2 \right\} \\ 
     &= c_{m, k}^2  \mathbb{E}\{ |q_{m, k}|^4 \} + c_{m, k}^2 \sum_{k^{\myprime} \in \mathcal{P}_k \backslash \{k\}}   \mathbb{E}\{ |q_{m, k^{\myprime}}^* q_{m, k}|^2 \} + \frac{c_{m, k}^2}{\rho \tau_p} \mathbb{E}\{|w_{m, k}^{(\mathrm{p})*}q_{m, k}|^2 \} +  \frac{c_{m, k}^2}{\rho \tau_p} \mathbb{E}\{|\Bar{p}_{m, k}^{(\mathrm{p})*}q_{m, k}|^2 \}.
     \label{E_qq2}
\end{aligned}
    \end{equation}
    \end{figure*}
In what follows, each term of (\ref{E_qq2}) is calculated. First, the expression of $\mathbb{E}\{ |q_{m, k}|^4 \}$ is given in (\ref{e_qq4}). The second term is also calculated
in (\ref{E_qmkqmkk}). Also, $\mathbb{E}\{|w_{m, k}^{(\mathrm{p})*}q_{m, k}|^2 \}$ thanks to the independency of the noise and the channel could be written as $\mathbb{E}\{|w_{m, k}^{(\mathrm{p})*}q_{m, k}|^2 \} = \sigma^2 \kappa_{m, k}$. Finally, $\mathbb{E}\{|\Bar{p}_{m, k}^{(\mathrm{p})*}q_{m, k}|^2 \}$ is given in (\ref{e_bq}). Finally, we can write (\ref{BU_u}) as follows
\begin{align}\nonumber
    \mathbb{E}\{| E_{k}^{\mathrm{BU}} |^2 \} &=  \sum_{m = 1}^{M} \sum_{m^{\myprime} = 1 }^{M}  \sum_{k^{\myprime} \in \mathcal{P}_k} \rho_u c_{m, k} c_{m^{\myprime}, k}  \mathrm{tr}\{ \boldsymbol{\Xi}_{m, k} \boldsymbol{\Xi}_{m^{\myprime}, k^{\myprime}} \} \\ \nonumber
    &+ \sum_{m = 1}^{M}  \rho_u \gamma_{m, k}^2 + \sum_{m = 1}^{M} 2 \rho_uc_{m, k}^2\mathrm{tr}\{ \boldsymbol{\Xi}_{m, k}^2 \} \\ \nonumber
    &+ \sum_{m = 1}^{M} \sum_{k^{\myprime} \in \mathcal{P}_k \backslash \{k\}} \rho_u c_{m, k}^2 \kappa_{m, k}\kappa_{m, k^{\myprime}} \\ \label{BUu}
    &+  \sum_{m = 1}^{M} \frac{\rho_u c_{m, k}^2 \sigma^2 \kappa_{m, k}}{\rho \tau_p} + \sum_{m = 1}^{M} \frac{\rho_u c_{m, k}^2 \alpha_{m, k}}{\rho \tau_p}. 
\end{align}
The next term of the denominator of (\ref{sinr}) could be written as follows
\begin{align} \nonumber
     \sum_{k^{\myprime} = 1, k^{\myprime} \neq k}^{K} \mathbb{E}\{| E_{k, k^{\myprime}}^{\mathrm{UI}} 
 |^2 \} &= \sum_{k^{\myprime} \notin \mathcal{P}_k} \mathbb{E}\{| E_{k, k^{\myprime}}^{\mathrm{UI}} |^2 \} \\ \label{inter_1}
 &+  \sum_{k^{\myprime} \in \mathcal{P}_k \backslash \{k\}} \mathbb{E}\{| E_{k, k^{\myprime}}^{\mathrm{UI}} |^2 \}.
\end{align}
The first term of (\ref{inter_1})  could be simplified as follows
\begin{align} \nonumber
     &\sum_{k^{\myprime} \notin \mathcal{P}_k} \mathbb{E}\{| E_{k, k^{\myprime}}^{\mathrm{UI}} |^2 \} = \sum_{k^{\myprime} \notin \mathcal{P}_k} \mathbb{E}\{| \sqrt{\rho_u}  \sum_{m = 1}^{M} \hat{q}_{m, k}^* q_{m, k^{\myprime}} |^2 \} \\ \label{E_out}
     &= \rho_u \sum_{k^{\myprime} \notin \mathcal{P}_k} \sum_{m = 1}^{M} \sum_{m^{\myprime} = 1}^{M}  \mathbb{E}\{  \hat{q}_{m, k}^* q_{m, k^{\myprime}}  \hat{q}_{m^{\myprime}, k} q_{m^{\myprime}, k^{\myprime}}^* \}.
\end{align}
The expectation term in (\ref{E_out}) can be written as follows
\begin{align} \nonumber
    \mathbb{E}\{  \hat{q}_{m, k}^*  & q_{m, k^{\myprime}}  \hat{q}_{m^{\myprime}, k} q_{m^{\myprime}, k^{\myprime}}^* \} \\ \nonumber
    &= c_{m, k} c_{m^{\myprime}, k} \sum_{k^{\mydprime} \in \mathcal{P}_k} \mathbb{E}\{  q_{m, k^{\mydprime}}^* q_{m, k^{\myprime}} q_{m^{\myprime}, k^{\mydprime}} q_{m^{\myprime}, k^{\myprime}}^*\} \\ \nonumber
    &+  \frac{c_{m, k} c_{m^{\myprime}, k}}{\rho \tau_p} \mathbb{E} \{  \Bar{p}_{m, k}^{(\mathrm{p})*}  q_{m, k^{\myprime}}  \Bar{p}_{m^{\myprime}, k}^{(\mathrm{p})} q_{m^{\myprime}, k^{\myprime}}^*\} \\ \nonumber
    &+  \frac{c_{m, k} c_{m^{\myprime}, k}}{\rho \tau_p}  \mathbb{E} \{  w_{m, k}^{(\mathrm{p})*}  q_{m, k^{\myprime}}  w_{m^{\myprime}, k}^{(\mathrm{p})} q_{m^{\myprime}, k^{\myprime}}^*\} \\ \label{E_inner}
    &= \begin{cases}
      c_{m, k} c_{m^{\myprime}, k} \sum_{k^{\mydprime} \in \mathcal{P}_k} T_{m, k^{^{\myprime}}, m^{\myprime}, k^{\mydprime}} & \text{, if $m \neq m^{\myprime}$}\\
      T_{m, k^{\myprime}, m, k} & \text{, if $m = m^{\myprime}$}
    \end{cases} .
\end{align}
where 
\begin{align}
&T_{m,k^{^{\myprime}} ,m^{\myprime},k^{\mydprime}} = \mathbb{E}\{  q_{m, k^{\mydprime}}^* q_{m, k^{\myprime}} q_{m^{\myprime}, k^{\mydprime}} q_{m^{\myprime}, k^{\myprime}}^*\} \\ \nonumber
      & T_{m, k^{\myprime}, m, k} = c_{m, k}^2 \sum_{k^{\mydprime} \in \mathcal{P}_k} \mathbb{E}\{  |q_{m, k^{\mydprime}}^* q_{m, k^{\myprime}}|^2 \} \\
      &+  \frac{c_{m, k}^2}{\rho \tau_p} \mathbb{E} \{  |\Bar{p}_{m, k}^{(\mathrm{p})*}  q_{m, k^{\myprime}} |^2 \} + \frac{c_{m, k}^2}{\rho \tau_p} \mathbb{E} \{  |w_{m, k}^{(\mathrm{p})*}  q_{m, k^{\myprime}}|^2 \} .
\end{align}
The expression for $T_{m,k^{^{\myprime}} ,m^{\myprime},k^{\mydprime}}$ is calculated in (\ref{eqqqq}) and  is follows 
\begin{align}\label{T_mkmkmkm}
    T_{m,k^{^{\myprime}} ,m^{\myprime},k^{\mydprime}} = \mathrm{tr}\{ \boldsymbol{\Xi}_{m, k^{\myprime}} \boldsymbol{\Xi}_{m^{\myprime}, k^{\mydprime}} \} .
\end{align}
Moreover,  $T_{m, k^{\myprime}, m, k}$ could be calculated  as follows
\begin{align}\nonumber
    T_{m, k^{\myprime}, m, k} &= c_{m, k}^2 \sum_{k^{\mydprime} \in \mathcal{P}_k} \kappa_{m, k^{\mydprime}}\kappa_{m, k^{\myprime}} \\ \nonumber
    &+ c_{m, k}^2 \sum_{k^{\mydprime} \in \mathcal{P}_k} \mathrm{tr}\{ \boldsymbol{\Xi}_{m, k^{\mydprime}} \boldsymbol{\Xi}_{m, k^{\myprime}} \} \\ \label{T_mkkmk}
    &+ \frac{c_{m, k}^2  \alpha_{m, k^{\myprime}}}{\rho \tau_p} + \frac{c_{m, k}^2 \sigma^2 \kappa_{m, k^{\myprime}}}{\rho \tau_p} .
\end{align}
Based on (\ref{T_mkmkmkm}) and (\ref{T_mkkmk}) , the first term of the right-hand-side (RHS) of (\ref{inter_1}) is given as follows
\begin{align} \nonumber
    &\sum_{k^{\myprime} \in \mathcal{P}_k} \mathbb{E}\{| E_{k, k^{\myprime}}^{\mathrm{UI}} |^2 \} = \rho_u  \sum_{k^{\myprime} \notin \mathcal{P}_k}  
      \sum_{k^{\mydprime} \in \mathcal{P}_k} \sum_{m = 1}^{M} c_{m, k}^2 \kappa_{m, k^{\mydprime}}\kappa_{m, k^{\myprime}} \\ \nonumber
      &+  \rho_u \sum_{k^{\myprime} \notin \mathcal{P}_k}   \sum_{m = 1}^{M} \frac{c_{m, k}^2  \alpha_{m, k^{\myprime}}}{\rho \tau_p} \\ \nonumber
      &+  \rho_u \sum_{k^{\myprime} \notin \mathcal{P}_k}   \sum_{m = 1}^{M} \frac{c_{m, k}^2 \sigma^2 \kappa_{m, k^{\myprime}}}{\rho \tau_p}  \\  \label{rhs_1}
      &+  \rho_u \sum_{k^{\myprime} \notin \mathcal{P}_k} \sum_{k^{\mydprime} \in \mathcal{P}_k}  \sum_{m = 1}^{M} \sum_{m^{\myprime} = 1}^{M}  c_{m, k} c_{m^{\myprime}, k} \mathrm{tr}\{ \boldsymbol{\Xi}_{m, k^{\myprime}} \boldsymbol{\Xi}_{m^{\myprime}, k^{\mydprime}} \} . 
\end{align}
The second term of RHS of the (\ref{inter_1}) could be written as (\ref{snd_eqq}) at the top of the next page.
\begin{figure*}[t]
\begin{equation}
\begin{aligned}
    \mathbb{E}\{| E_{k, k^{\myprime}}^{\mathrm{UI}} |^2 \} &=  \mathbb{E}\{| \sqrt{\rho_u}  \sum_{m = 1}^{M} \hat{q}_{m, k}^* q_{m, k^{\myprime}} |^2 \} = \rho_u \mathbb{E}\left\{  \left|  \sum_{m = 1}^{M} c_{m, k} y_{m, k}^{(\mathrm{p})*} q_{m, k^{\myprime}} \right|^2 \right\} \\  
    &= \rho_u \mathbb{E}\left\{  \left|  \sum_{m = 1}^{M} c_{m, k} \left( \sum_{k^{\mydprime} \in \mathcal{P}_k }  q_{m, k^{\mydprime}}^*  + \frac{1}{\sqrt{\rho \tau_p}} \Bar{p}_{m, k}^{(\mathrm{p})*} + \frac{1}{\sqrt{\rho \tau_p}} w_{m, k}^{(\mathrm{p})*} \right) q_{m, k^{\myprime}} \right|^2 \right\} \\ 
    &= \rho_u  \mathbb{E}\left\{  \left|c_{m, k} \sum_{m = 1}^{M} |q_{m, k^{\myprime}}|^2 \right|^2  \right\} + \rho_u \mathbb{E}\left\{ \left|  \sum_{m = 1}^{M} c_{m, k} \left(   \sum_{k^{\mydprime} \in \mathcal{P}_k \backslash \{ k^{\myprime} \}} q_{m, k^{\mydprime}}^* \right)  q_{m, k^{\myprime}}\right|^2  \right\} \\  
    &+ \frac{\rho_u}{\rho \tau_p} \mathbb{E}\left\{ \left|  \sum_{m = 1}^{M} c_{m, k}  \Bar{p}_{m, k}^{(\mathrm{p})*} q_{m, k^{\myprime}} \right|^2 \right\} + \frac{\rho_u}{\rho \tau_p} \mathbb{E}\left\{ \left|  \sum_{m = 1}^{M} c_{m, k}  w_{m, k}^{(\mathrm{p})*} q_{m, k^{\myprime}} \right|^2 \right\}.
    \label{snd_eqq} 
\end{aligned}
    \end{equation}
\end{figure*}

In what follows, each term of (\ref{snd_eqq}) will be calculated separately. The first term of RHS of  (\ref{snd_eqq}) could be written as
\begin{align} \nonumber
    \rho_u & \mathbb{E}\left\{  \left|c_{m, k} \sum_{m = 1}^{M} |q_{m, k^{\myprime}}|^2  \right|^2  \right\} \\ \displaybreak[0] \nonumber
    &=  \rho_u  \sum_{m = 1}^{M} \sum_{m^{\myprime} = 1}^{M}   c_{m, k} c_{m^{\myprime}, k} \mathbb{E}\left\{ |q_{m, k^{\myprime}}|^2 |q_{m^{\myprime}, k^{\myprime}}|^2   \right\} \\\nonumber \displaybreak[0]
    &=  \rho_u  \sum_{m = 1}^{M} c_{m, k}^2 \mathbb{E}\left\{ |q_{m, k^{\myprime}}|^4   \right\} \\ \displaybreak[0] \nonumber
    &+ \rho_u  \sum_{m = 1}^{M} \sum_{m^{\myprime} = 1, m^{\myprime} \neq m}^{M}   c_{m, k} c_{m^{\myprime}, k} \mathbb{E}\left\{ |q_{m, k^{\myprime}}|^2 |q_{m^{\myprime}, k^{\myprime}}|^2   \right\} \\ \displaybreak[0] \nonumber 
    &= 2\rho_u  \sum_{m = 1}^{M}  c_{m, k}^2  \kappa_{m, k^{\myprime}}^2 + 2\rho_u  \sum_{m = 1}^{M}  c_{m, k}^2  \mathrm{tr}\{ \boldsymbol{\Xi}_{m, k^{\myprime}}^2 \} \\ \displaybreak[0] \nonumber
    &+  \rho_u  \sum_{m = 1}^{M} \sum_{m^{\myprime} = 1, m^{\myprime} \neq m}^{M}   c_{m, k} c_{m^{\myprime}, k} \kappa_{m, k^{\myprime}}\kappa_{m^{\myprime}, k^{\myprime}} \\ \displaybreak[0] \nonumber 
    &+  \rho_u  \sum_{m = 1}^{M} \sum_{m^{\myprime} = 1, m^{\myprime} \neq m}^{M}   c_{m, k} c_{m^{\myprime}, k} \mathrm{tr}\{ \boldsymbol{\Xi}_{m^{\myprime}, k^{\myprime}} \boldsymbol{\Xi}_{m, k^{\myprime}} \} \\ \displaybreak[0] \nonumber 
    &= \rho_u  \sum_{m = 1}^{M}  c_{m, k}^2  \kappa_{m, k^{\myprime}}^2 + \rho_u  \sum_{m = 1}^{M}  c_{m, k}^2  \mathrm{tr}\{ \boldsymbol{\Xi}_{m, k^{\myprime}}^2 \} \\ \displaybreak[0] \nonumber
    &+ \rho_u  \left( \sum_{m = 1}^{M}   c_{m, k}  \kappa_{m, k^{\myprime}} \right)^2 \\ \displaybreak[0] \label{rhs_1_eqq} 
    &+  \rho_u  \sum_{m = 1}^{M} \sum_{m^{\myprime} = 1}^{M}   c_{m, k} c_{m^{\myprime}, k} \mathrm{tr}\{ \boldsymbol{\Xi}_{m^{\myprime}, k^{\myprime}} \boldsymbol{\Xi}_{m, k^{\myprime}} \}.
\end{align}
The second term of RHS of  (\ref{snd_eqq}) could be written as (\ref{rhs_2_eqq}) at the top of the next page.

   \begin{figure*}[t]
    \begin{equation}
\begin{aligned}
  &\rho_u \mathbb{E}\left\{ \left|  \sum_{m = 1}^{M} c_{m, k} \left(   \sum_{k^{\mydprime} \in \mathcal{P}_k \backslash \{ k^{\myprime} \}} q_{m, k^{\mydprime}}^* \right)  q_{m, k^{\myprime}}\right|^2  \right\} \\ 
    &=  \rho_u\sum_{k^{\mydprime} \in \mathcal{P}_k \backslash \{ k^{\myprime} \}}  \sum_{m = 1}^{M} c_{m, k}^2  \mathbb{E}\left\{ \left|  q_{m, k^{\mydprime}} \right|^2  \left|  q_{m, k^{\myprime}} \right|^2 \right\} \\
    &+ \rho_u\sum_{k^{\mydprime} \in \mathcal{P}_k \backslash \{ k^{\myprime} \}}  \sum_{m = 1}^{M}  \sum_{m^{\myprime} = 1, m^\prime \neq m}^{M} c_{m, k} c_{m^{\myprime}, k} \mathbb{E}\{ q_{m, k^{\mydprime}}^*  q_{m, k^{\myprime}} q_{m^{\myprime}, k^{\mydprime}} q_{m^{\myprime}, k^{\myprime}}^* \} \\ 
    &= \rho_u\sum_{k^{\mydprime} \in \mathcal{P}_k \backslash \{ k^{\myprime} \}}  \sum_{m = 1}^{M}   c_{m, k}^2 \kappa_{m, k^{\mydprime}}\kappa_{m, k^{\myprime}} + \rho_u\sum_{k^{\mydprime} \in \mathcal{P}_k \backslash \{ k^{\myprime} \}}  \sum_{m = 1}^{M}  \sum_{m^{\myprime} = 1}^{M} c_{m, k} c_{m^{\myprime}, k} \mathrm{tr}\{ \boldsymbol{\Xi}_{m, k^{\myprime}} \boldsymbol{\Xi}_{m^{\myprime}, k^{\mydprime}} \} 
    \label{rhs_2_eqq}
\end{aligned}
    \end{equation}
    \end{figure*}

The third term of RHS of  (\ref{snd_eqq}) is given as follows
\begin{align} \nonumber
    &\frac{\rho_u}{\rho \tau_p} \mathbb{E}\left\{ \left|  \sum_{m = 1}^{M} c_{m, k}  \Bar{p}_{m, k}^{(\mathrm{p})*} q_{m, k^{\myprime}} \right|^2 \right\} \\ 
    &=  \frac{\rho_u}{\rho \tau_p} \sum_{m = 1}^{M} c_{m, k}^2 \mathbb{E}\left\{ \left|  \Bar{p}_{m, k}^{(\mathrm{p})*} q_{m, k^{\myprime}} \right|^2 \right\} = \frac{\rho_u}{\rho \tau_p} \sum_{m = 1}^{M} c_{m, k}^2 \alpha_{m, k^{\myprime}}.
\end{align}
Finally, the fourth term of RHS of  (\ref{snd_eqq}) is
\begin{align}
    \frac{\rho_u}{\rho \tau_p} \mathbb{E}\left\{ \left|  \sum_{m = 1}^{M} c_{m, k}  w_{m, k}^{(\mathrm{p})*} q_{m, k^{\myprime}} \right|^2 \right\} = \frac{\rho_u \sigma^2}{\rho \tau_p} \sum_{m = 1}^{M} c_{m, k}^2 \kappa_{m, k^{\myprime}}.
\end{align}

So far, all the terms of (\ref{snd_eqq})
 is calculated and is given as follows
 \begin{align} \label{rhs_2}
      &\sum_{k^{\myprime} \in \mathcal{P}_k \backslash \{k\}} \mathbb{E}\{| E_{k, k^{\myprime}}^{\mathrm{UI}} |^2 \} = \rho_u \sum_{k^{\myprime} \in \mathcal{P}_k \backslash \{k\}} \sum_{m = 1}^{M}  c_{m, k}^2  \kappa_{m, k^{\myprime}}^2 \\ \nonumber
      &+ \rho_u \sum_{k^{\myprime} \in \mathcal{P}_k \backslash \{k\}}  \sum_{m = 1}^{M}  c_{m, k}^2  \mathrm{tr}\{ \boldsymbol{\Xi}_{m, k^{\myprime}}^2 \} \\ \nonumber
      &+ \rho_u \sum_{k^{\myprime} \in \mathcal{P}_k \backslash \{k\}}  \left( \sum_{m = 1}^{M}   c_{m, k}  \kappa_{m, k^{\myprime}} \right)^2 \\ \nonumber
      &+ \rho_u \sum_{k^{\myprime} \in \mathcal{P}_k \backslash \{k\}} \sum_{k^{\mydprime} \in \mathcal{P}_k \backslash \{ k^{\myprime} \}}  \sum_{m = 1}^{M}   c_{m, k}^2 \kappa_{m, k^{\mydprime}}\kappa_{m, k^{\myprime}} \\ \nonumber
    &+ \rho_u \sum_{k^{\myprime} \in \mathcal{P}_k \backslash \{k\}} \sum_{k^{\mydprime} \in \mathcal{P}_k }  \sum_{m = 1}^{M}  \sum_{m^{\myprime} = 1}^{M} c_{m, k} c_{m^{\myprime}, k} \mathrm{tr}\{ \boldsymbol{\Xi}_{m, k^{\myprime}} \boldsymbol{\Xi}_{m^{\myprime}, k^{\mydprime}} \} \\ \nonumber
    &+ \frac{\rho_u}{\rho \tau_p} \sum_{k^{\myprime} \in \mathcal{P}_k \backslash \{k\}} \sum_{m = 1}^{M} c_{m, k}^2 \alpha_{m, k^{\myprime}} \\ \nonumber
    &+ \frac{\rho_u \sigma^2}{\rho \tau_p} \sum_{k^{\myprime} \in \mathcal{P}_k \backslash \{k\}} \sum_{m = 1}^{M} c_{m, k}^2 \kappa_{m, k^{\myprime}}.
 \end{align}
Based on (\ref{rhs_1}) and (\ref{rhs_2}), (\ref{inter_1}) could be written as (\ref{UI}) at the top of the next page.

   \begin{figure*}[t]
    \begin{equation}
\begin{aligned}
&\sum_{k^{\myprime} = 1, k^{\myprime} \neq k}^{K} \mathbb{E}\{| E_{k, k^{\myprime}}^{\mathrm{UI}} 
 |^2 \} =  \rho_u \sum_{k^{\myprime}=1, k^{\myprime} \neq k}^K \sum_{k^{\mydprime} \in \mathcal{P}_k}  \sum_{m = 1}^{M} \sum_{m^{\myprime} = 1}^{M}  c_{m, k} c_{m^{\myprime}, k} \mathrm{tr}\{ \boldsymbol{\Xi}_{m, k^{\myprime}} \boldsymbol{\Xi}_{m^{\myprime}, k^{\mydprime}} \} \\ 
 & + \rho_u  \sum_{k^{\myprime} =1, k^{\myprime} \neq k}^K  
      \sum_{k^{\mydprime} \in \mathcal{P}_k} \sum_{m = 1}^{M} c_{m, k}^2 \kappa_{m, k^{\mydprime}}\kappa_{m, k^{\myprime}}+  \frac{\rho_u}{\rho \tau_p} \sum_{k^{\myprime} =1, k^{\myprime}\neq k}^K   \sum_{m = 1}^{M}c_{m, k}^2  \alpha_{m, k^{\myprime}} \\ 
      &+  \frac{\rho_u \sigma^2}{\rho \tau_p} \sum_{k^{\myprime} =1, k^{\myprime} \neq k}   \sum_{m = 1}^{M} c_{m, k}^2 \kappa_{m, k^{\myprime}} + \rho_u \sum_{k^{\myprime} \in \mathcal{P}_k \backslash \{k\}}  \left( \sum_{m = 1}^{M}   c_{m, k}  \kappa_{m, k^{\myprime}} \right)^2   \\ 
      &+ \rho_u \sum_{k^{\myprime} \in \mathcal{P}_k \backslash \{k\}} \sum_{m = 1}^{M}  c_{m, k}^2  \kappa_{m, k^{\myprime}}^2 + \rho_u \sum_{k^{\myprime} \in \mathcal{P}_k \backslash \{k\}}  \sum_{m = 1}^{M}  c_{m, k}^2  \mathrm{tr}\{ \boldsymbol{\Xi}_{m, k^{\myprime}}^2 \} 
    \label{UI}
\end{aligned}
    \end{equation}
    \end{figure*}

The next term to be calculated is $E_{k}^{\mathrm{AN}}$ and is given as follows
\begin{align} \label{AN}
    \mathbb{E}\{|E_{k}^{\mathrm{AN}}|^2\} &= \sum_{m=1}^M \mathbb{E}\{|\hat{q}_{m, k}^* p_m|^2 \} = \sum_{m=1}^M \alpha_{m, k}
\end{align}

The noise term of the denominator of (\ref{sinr}) can be written as follows
\begin{align} \label{NO}
    \mathbb{E}\{| E_{k}^{\mathrm{NO}} 
 |^2 \} = \sum_{m = 1}^{M}    \mathbb{E}\{|\hat{q}_{m, k}^* w_{m} |^2 \}  = \sigma^2 \sum_{m = 1}^{M} \kappa_{m, k}.
\end{align}

Finally, by inserting (\ref{DS}), (\ref{BUu}) (\ref{UI}), (\ref{AN}) and (\ref{NO}) in (\ref{sinr}) and with the aid of some algebraic manipulations, we arrive at (\ref{sinr_simple}) which completes the proof.

\section*{Appendix C} \label{appC}
The total power consumption is modeled as \cite{ngo2017total, bjornson2015optimal}
\begin{equation} \label{total_p}
    P_{\mathrm{total}} = K \zeta  \rho_u + P_{\mathrm{bh}, m} + P_{\mathrm{ARIS}}
\end{equation}
where $0<\zeta \leq 1$ is the power amplifier efficiency of each user. Also, $P_{\mathrm{ARIS}}$  is given in  (\ref{p_aris}). Furthermore, $P_{\mathrm{bh}, m}$ is the backhaul that transfers the data between the CPU and the APs. More specifically, its power consumption can be written as
\begin{equation}
    P_{\mathrm{bh}, m} = P_{0, m} + B \times\mathrm{SE}_k \times P_{\mathrm{bt}, m},
\end{equation}
where $P_{0, m}$ is the fixed power consumption of each backhaul and $P_{\mathrm{bt}, m}$ is the traffic-dependent power (in $\text{Watt per bit/sec}$) and $B$ is the system bandwidth.

Finally, based on the total power consumption in (\ref{total_p}), the EE is defined  as
\begin{equation} \label{total_ee}
   \iota =  \frac{B \sum_{k=1}^K R_k}{P_{\mathrm{total}}}
\end{equation}

\end{document}